\def\Inner#1#2{ {\left\langle #1,#2\right\rangle}}
\def\inner#1#2{ {\langle #1,#2\rangle}}
\def\vector#1#2{\left[\begin{array}{cc}#1\cr #2\end{array}\right]}
\def\cqfd{\hfill\hbox{$\hbox{\vrule width 0.8pt
\vbox to6pt{\hrule depth 0.8pt width 5.2pt
\vfill\hrule depth 0.8pt}\vrule width 0.8pt}$}}
\def\JS{\mathrm{JS}}
\newcommand{\es}{\mbox{\textbf{E}}}
\newcommand{\tja}{\mathrm{tJ}_\alpha}
\newcommand{\tjopta}{\mathrm{tJ}_{\mathrm{opt},\alpha}}
\newtheorem{theorem}{Theorem}
\newtheorem{lemma}{Lemma}
\newtheorem{corollary}{Corollary}
\newtheorem{open}{Open problem}
\newtheorem{remark}{Remark}
\newenvironment{proof}[1][Proof]{\begin{trivlist}
\item[\hskip \labelsep {\bfseries #1}]}{\end{trivlist}}
\def\tJ{\mathrm{tJ}}
\def\tB{\mathrm{tB}}
\def\tr{\mathrm{tr}}
\def\KL{\mathrm{KL}}
\def\X{\mathcal{X}}
\def\SL{\mathrm{SL}}
\def\eps{\epsilon}
\def\tx{{\tilde{x}}}
\def\X{\mathcal{X}}
\def\dnu{\mathrm{d}\nu}
\author{
\authorblockN{Frank Nielsen,~\IEEEmembership{Senior Member,~IEEE}}\\
\authorblockA{Sony Computer Science Laboratories, Inc.\\
3-14-13 Higashi Gotanda, 141-0022 Shinagawa-ku\\ Tokyo, Japan\\
{\tt Frank.Nielsen@acm.org}\\
{\tt www.informationgeometry.org}
}\\
\and
\authorblockN{Richard Nock,~\IEEEmembership{Non-member}}\\
\authorblockA{CEREGMIA\\
University of Antilles-Guyane\\
Martinique, France\\
{\tt e-mail: rnock@martinique.univ-ag.fr}
}
}
\title{Total Jensen divergences: Definition, Properties and $k$-Means++ Clustering}
\begin{document}

\maketitle


\maketitle 

\IEEEdisplaynotcompsoctitleabstractindextext
\IEEEpeerreviewmaketitle

\begin{abstract}
We present a novel class of divergences induced by a smooth convex function called {\em total Jensen divergences}.
Those total Jensen divergences are invariant by construction to rotations, a feature yielding regularization of ordinary Jensen divergences by a conformal factor.
We analyze the relationships between this novel class of total Jensen divergences and the recently introduced total Bregman divergences.
We then proceed by defining the {\em total Jensen centroids} as average distortion minimizers, and study their robustness performance to outliers.
Finally, we prove that the $k$-means++ initialization that bypasses explicit centroid computations is good enough in practice to guarantee probabilistically a constant approximation factor to the optimal $k$-means clustering.
\end{abstract}

\begin{IEEEkeywords}
Total Bregman divergence, skew Jensen divergence, Jensen-Shannon divergence, Burbea-Rao divergence, centroid robustness, Stolarsky mean, clustering, $k$-means++.
\end{IEEEkeywords}

\section{Introduction: Geometrically designed divergences from convex functions}

A divergence $D(p:q)$ is a smooth distortion measure that quantifies the dissimilarity between any two data points $p$ and $q$ (with $D(p:q)=0$ iff. $p=q$).
A statistical divergence is a divergence between probability (or positive) measures.
There is a rich literature of divergences~\cite{Basseville-2013} that have been proposed and used depending on their axiomatic characterizations or their empirical performance. 
One motivation to design new divergence families, like the proposed total Jensen divergences in this work, is to elicit some statistical robustness property that allows to bypass the use of costly $M$-estimators~\cite{DTMRI-2011}.
We first recall the main geometric constructions of divergences from the graph plot of a convex function.
That is, we concisely review the Jensen divergences, the Bregman divergences and the total Bregman divergences induced from the graph plot of a convex generator.
We dub those divergences: {\em geometrically designed divergences} (not to be confused with the geometric divergence~\cite{GeometricDivergence-2012} in information geometry).

\subsection{Skew Jensen and Bregman divergences\label{sec:jd}}

For a strictly convex and differentiable function $F$, called the {\em generator}, we define a family of parameterized distortion measures by:

\begin{eqnarray}
J'_\alpha(p:q) &=& \alpha F(p)+(1-\alpha)F(q) -F(\alpha p+(1-\alpha)q),\quad \alpha\not\in\{0,1\},\\
 &=&  (F(p)F(q))_\alpha - F((pq)_\alpha),
\end{eqnarray}
where $(pq)_\gamma=\gamma p+(1-\gamma)q=q + \gamma(p-q) $ and $(F(p)F(q))_\gamma=\gamma F(p)+(1-\gamma)F(q)=F(q) + \gamma(F(p)-F(q))$.
The skew Jensen divergence is depicted graphically by a Jensen convexity gap, as illustrated in Figure~\ref{fig:gd} and in Figure~\ref{fig:sj}   by the plot of the generator function.
The skew Jensen divergences are {\em asymmetric} (when $\alpha\not=\frac{1}{2}$) and does {\em not} satisfy the triangular inequality of metrics.
For $\alpha=\frac{1}{2}$, we get a symmetric divergence $J'(p,q)=J'(q,p)$, also called Burbea-Rao divergence~\cite{2011-brbhat}.
It follows from the strict convexity of the generator that $J'_\alpha(p:q)\geq 0$ with equality if and only if $p=q$ (a property termed the {\em identity of indiscernibles}). The skew Jensen divergences {\em may not} be convex divergences.\footnote{A Jensen divergence is convex if and only if $\nabla^2 F(x)  \geq \frac{1}{2} \nabla^2((x+y)/2)$ for all $x,y\in\mathcal{X}$. See~\cite{ISVD:2010} for further details.} 
Note that the generator may be defined up to a constant $c$, and that $J'_{\alpha,\lambda F+c}(p:q)=\lambda J'_{\alpha,F}(p:q)$ for $\lambda>0$. 
By rescaling those divergences by a fixed factor $\frac{1}{\alpha(1-\alpha)}$, we obtain a continuous $1$-parameter family of divergences,
called the {\em $\alpha$-skew Jensen divergences},
 defined over the {\em full} real line $\alpha\in\mathbb{R}$ as follows~\cite{zhang-duality-2004,2011-brbhat}:

\begin{equation}
J_\alpha(p:q) = \left\{ 
\begin{array}{ll}
\frac{1}{\alpha(1-\alpha)} J'_\alpha(p:q) & \alpha\not =\{0,1\},\\
B(p:q) & \alpha=0,\\
B(q:p) & \alpha=1.
\end{array}
\right.
\end{equation}
where $B(\cdot:\cdot)$ denotes\footnote{For sake of simplicity, when it is clear from the context, we drop the function generator $F$ underscript in all the $B$ and $J$ divergence notations.} the Bregman divergence~\cite{bregmankmeans-2005}:

\begin{equation}
B(p:q) = F(p)-F(q)-\inner{p-q}{\nabla F(q)},
\end{equation}
with  $\inner{x}{y}=x^\top y$ denoting the inner product ({\it e.g.} scalar product for vectors).
Figure~\ref{fig:gd} shows graphically the Bregman divergence as the ordinal distance between $F(p)$ and the first-order Taylor expansion of $F$ at $q$ evaluated at $p$. 
Indeed, the limit cases of Jensen divergences $J_\alpha(p:q)=\frac{1}{\alpha(1-\alpha)} J'_\alpha(p:q)$ when $\alpha=0$ or $\alpha=1$ tend to a Bregman divergence~\cite{2011-brbhat}: \begin{eqnarray}
\lim_{\alpha\rightarrow 0} J_\alpha(p:q)&=&B(p:q),\\
\lim_{\alpha\rightarrow 1} J_\alpha(p:q)&=&B(q:p).
\end{eqnarray} 

\def\Bhat{\mathrm{Bhat}}
\def\dx{\mathrm{dx}}

The skew Jensen divergences are related to {\em statistical divergences} between probability distributions:
\begin{remark}
The statistical skew Bhattacharrya divergence~\cite{2011-brbhat}: 
\begin{equation}
\Bhat(p_1:p_2)=-\log \int p_1(x)^{\alpha} p_2(x)^{1-\alpha} \dnu(x),
\end{equation} 
between parameterized distributions
 $p_1=p(x|\theta_1)$ and $p_2=p(x|\theta_2)$ belonging to the {\em same} exponential family amounts to compute equivalently a skew Jensen divergence
 on the corresponding natural parameters for the log-normalized function $F$: $\Bhat(p_1:p_2)=J'_\alpha(\theta_1:\theta_2)$. $\nu$ is the counting measure for   discrete distributions and the Lebesgue measure for continous distributions.
See~\cite{2011-brbhat} for details.
\end{remark}

\begin{remark}
When considering a family of divergences parameterized by a smooth convex generator $F$, observe that the convexity of the function $F$ is preserved by an affine transformation. Indeed, let $G(x)=F(ax)+b$, then $G''(x)=a^2 F''(ax)>0$ since $F''>0$.
\end{remark}

\begin{figure}
\centering
\includegraphics[width=0.7\textwidth]{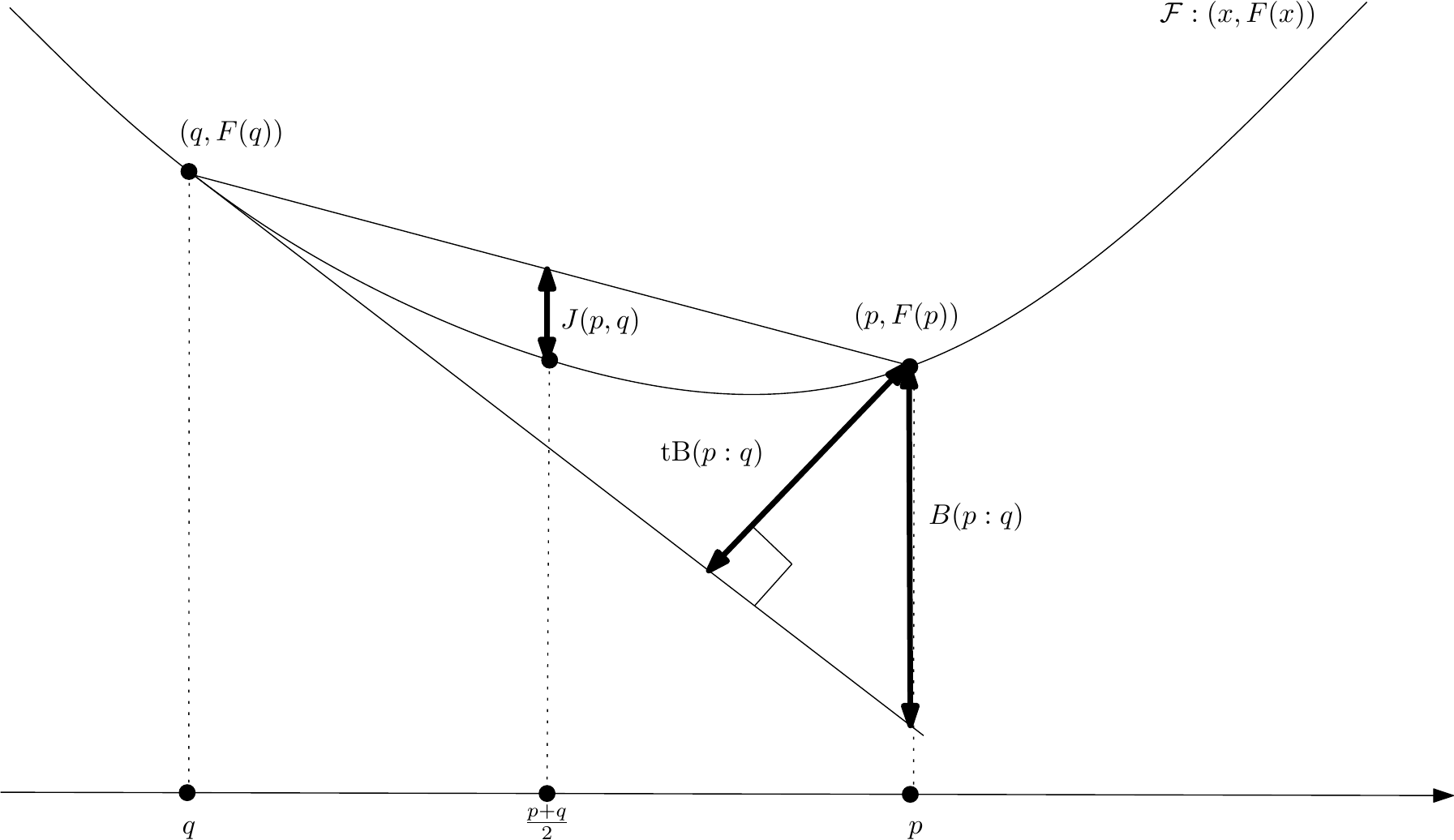}

\caption{Geometric construction of a divergence from the graph plot $\mathcal{F}=\{(x,F(x))\ |\ x\in\mathcal{X}\}$ of a convex generator $F$: The symmetric Jensen (J) divergence $J(p,q)=J_{\frac{1}{2}}(p:q)$ is interpreted as the vertical gap between the point $(\frac{p+q}{2},F(\frac{p+q}{2}))$ of $\mathcal{F}$ and the interpolated point $(\frac{p+q}{2},\frac{F(p)+F(q)}{2})$.
The asymmetric Bregman divergence (B) is interpreted as the ordinal difference between $F(p)$ and the linear approximation of $F$ at $q$ (first-order Taylor expansion) evaluated at $p$. The total Bregman (tB) divergence projects orthogonally $(p,F(p))$ onto the tangent hyperplane of $\mathcal{F}$ at $q$. 
\label{fig:gd}}
\end{figure}

\subsection{Total Bregman divergences: Definition and invariance property\label{sec:tbd}}
Let us consider an application in medical imaging to motivate the need for a particular kind of invariance when defining divergences:
In Diffusion Tensor Magnetic Resonance Imaging (DT-MRI), 3D raw data are captured at voxel positions as 3D ellipsoids denoting the water propagation characteristics~\cite{DTMRI-2011}. To perform common signal processing tasks like denoising, interpolation or segmentation tasks, one needs to define a proper {\em dissimilarity measure} between any two such ellipsoids. Those ellipsoids are mathematically handled as {\em Symmetric Positive Definite}  (SPD) matrices~\cite{DTMRI-2011} that can also be interpreted as 3D Gaussian probability distributions.\footnote{It is well-known that in information geometry~\cite{informationgeometry-2000}, the statistical $f$-divergences, including the prominent Kullback-Leibler divergence, remain unchanged after applying an invertible transformation on the parameter space of the distributions, see Appendix~\ref{app:invariance} for such an illustrating example.}
In order not to be biased by the chosen coordinate system for defining those ellipsoids, we ask for a divergence that is invariant to rotations of the coordinate system.
For a divergence parameterized by a generator function $F$ derived from the graph of that generator,  the invariance under rotations
  means that the geometric quantity defining the divergence should not change if the original coordinate system is rotated. 
  This is clearly not the case for the skew Jensen divergences that rely on the vertical axis to measure the ordinal distance, as illustrated in Figure~\ref{fig:sj}. 
To cope with this drawback, the family of {\em total Bregman divergences} (tB) have been introduced and shown to be statistically robust~\cite{DTMRI-2011}.
Note that although the traditional Kullback-Leibler divergence (or its symmetrizations like the Jensen-Shannon divergence or the Jeffreys divergence~\cite{2011-brbhat}) between two multivariate Gaussians could have been used to provide the desired invariance (see Appendix~\ref{app:invariance}), the processing tasks are not robust to outliers and perform less well in practice~\cite{DTMRI-2011}.  

The {\em total Bregman divergence} amounts to compute a scaled Bregman divergence: 
Namely a Bregman divergence multiplied by a {\em conformal factor}\footnote{In Riemannian geometry, a model geometry is said conformal if it preserves the angles. For example, in hyperbolic geometry, the Poincar\'e disk model is conformal but not the Klein disk model~\cite{hyperbolicvoronoi:2010}.
In a conformal model of a geometry, the Riemannian metric can be written as a product of the tensor metric times a conformal factor $\rho$: $g_M(p)=\rho(p) g(p)$. By analogy, we shall call a conformal divergence $D'(p:q)$, a divergence $D(p:q)$ that is multiplied by a conformal factor: $D'(p:q)=\rho(p,q) D(p:q)$.} $\rho_B$:

\begin{eqnarray}
\tB(p:q) &=& \frac{B(p:q)}{\sqrt{1+\inner{\nabla F(q)}{\nabla F(q)}}} = \rho_B(q) B(p:q),\\
\rho_B(q) &=& \frac{1}{\sqrt{1+\inner{\nabla F(q)}{\nabla F(q)}}}.\label{eq:rhob}
\end{eqnarray}
Figure~\ref{fig:gd} illustrates the total Bregman divergence as the orthogonal projection of point $(p,F(p))$ onto the tangent line of $\mathcal{F}$ at $q$.

For example, choosing the generator $F(x)=\frac{1}{2}\inner{x}{x}$ with $x\in\X=\mathbb{R}^d$, we get the {\em total square Euclidean distance}:

\begin{equation}
tE(p,q) = \frac{1}{2}\frac{\inner{p-q}{p-q}}{\sqrt{1+\inner{q}{q}}}.
\end{equation}
That is, $\rho_B(q)=\sqrt{\frac{1}{1+\inner{q}{q}}}$ and $B(p:q)=\frac{1}{2}\inner{p-q}{p-q}=\frac{1}{2} \|p-q\|^2_2$.

Total Bregman divergences~\cite{PhDMeizhu-2011}  have proven successful in many applications: Diffusion tensor imaging~\cite{DTMRI-2011} (DTI), 
shape retrieval~\cite{2012-tBDShapeRetrieval}, boosting~\cite{tBD-Boosting-2011}, multiple object tracking~\cite{tBD-tracking-2013}, 
tensor-based graph matching~\cite{tBD-Tensor-2011}, just to name a few.

The total Bregman divergences can be defined over the space of symmetric positive definite (SPD) matrices met in DT-MRI~\cite{DTMRI-2011}.
One key feature of the total Bregman divergence defined over such matrices is its invariance under the special linear  group~\cite{PhDMeizhu-2011} $\SL(d)$ that consists of $d \times d$ matrices of unit determinant:

\begin{equation}
\tB(A^\top P A: A^\top Q A) =   \tB(P:Q),\ \forall A\in\SL(d).
\end{equation}
See also appendix~\ref{app:invariance}.

\subsection{Outline of the paper}
The paper is organized as follows: Section~\ref{sec:tJ} introduces the novel class of total Jensen divergences (tJ) and present some of its properties and relationships with the total Bregman divergences. In particular, we show that although the square root of the Jensen-Shannon is a metric, it is not true anymore for the square root of the total Jensen-Shannon divergence.
Section~\ref{sec:centroid} defines the centroid with respect to total Jensen divergences, and proposes a doubly iterative algorithmic scheme to approximate those centroids.
The notion of robustness of centroids is studied via the framework of the influence function of an outlier.
Section~\ref{sec:cluster} extends the $k$-means++ seeding to total Jensen divergences, yielding a fast probabilistically guaranteed clustering algorithm that do not require to compute explicitly those centroids. Finally, Section~\ref{sec:concl} summarizes the contributions and concludes this paper.

\section{Total Jensen divergences\label{sec:tJ}}

\subsection{A geometric definition}
Recall that the skew Jensen divergence $J_\alpha'$ is defined as the ``vertical'' distance between the interpolated point $((pq)_\alpha, (F(p)F(q))_\alpha)$ lying on the line segment $[(p,F(p)),(q,F(q))]$ and the point $((pq)_\alpha, F((pq)_\alpha)$ lying on the graph of the generator.
This measure is therefore {\em dependent} on the coordinate system chosen for representing the sample space $\mathcal{X}$ since the notion of ``verticality'' depends on the coordinate system.
To overcome this limitation, we define the {\em total} Jensen divergence by choosing the {\em unique orthogonal projection} of $((pq)_\alpha, F((pq)_\alpha)$ onto the line 
$((p,F(p)),(q,F(q))])$. 
Figure~\ref{fig:sj} depicts graphically the total Jensen divergence, and shows its invariance under a rotation.
Figure~\ref{fig:projoutside} illustrates the fact that the orthogonal projection of point $((pq)_\alpha, F((pq)_\alpha)$ onto the line $((p,F(p)),(q,F(q)))$ may fall outside the segment $[(p,F(p)),(q,F(q))]$.

\begin{figure}
\centering
\includegraphics[width=\textwidth]{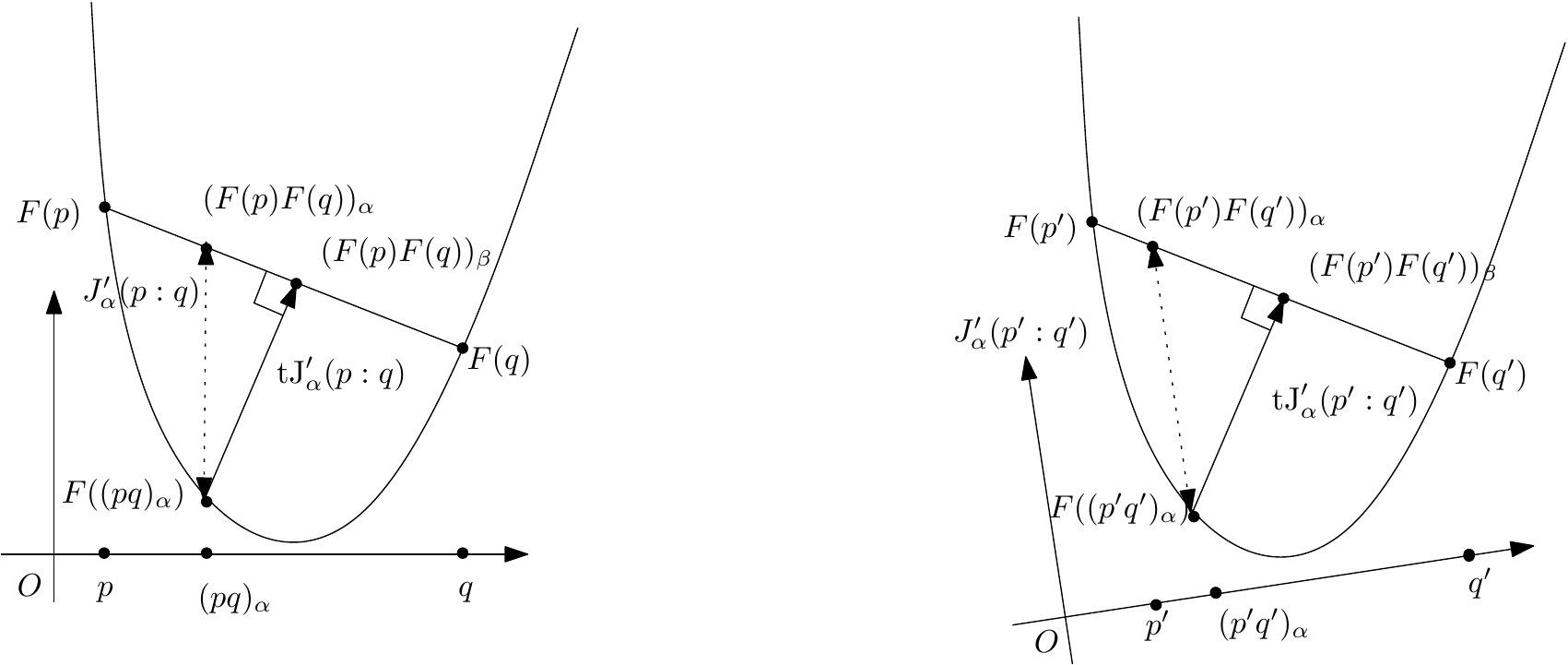}

\caption{The total Jensen $\tJ'_\alpha$ divergence is defined as the unique orthogonal projection of point $((pq)_\alpha, F((pq)_\alpha))$ onto
the line passing through $(p,F(p))$ and $(q,F(q))$.
It is invariant by construction to a rotation of the coordinate system. (Observe that the bold line segments have same length but the dotted line segments have different lengths.)
Except for the paraboloid $F(x)=\frac{1}{2}\inner{x}{x}$, the divergence is not invariant to translations.
\label{fig:sj}}
\end{figure}

We report below a geometric proof (an alternative purely analytic proof is also given in the Appendix~\ref{app:analytic}).
Let us plot the epigraph of function $F$ restricted to the vertical plane passing through distinct points $p$ and $q$.
Let $\Delta_F=F(q)-F(p)\in\mathbb{R}$ and $\Delta=q-p\in\mathcal{X}$  (for $p\not =q$).
Consider the two right-angle triangles $\Delta T_1$ and $\Delta T_2$ depicted in Figure~\ref{fig:gproof}.
Since Jensen divergence $J$ and  $\Delta_F$ are vertical line segments intersecting 
the line passing through point $(p,F(p))$ and point $(q,F(q))$, we deduce that the angles $\hat{b}$ are the same.
Thus it follows that angles $\hat{a}$ are also identical.
Now, the cosine of angle $\hat{a}$ measures the ratio of the adjacent side over the hypotenuse of right-angle triangle $\triangle T_2$.
 Therefore it follows that: 

\begin{equation}
\cos \hat{a} = \frac{\|\Delta\|}{\sqrt{\inner{\Delta}{\Delta}+\Delta_F^2}}=\sqrt{\frac{1}{1+\frac{\Delta_F^2}{\inner{\Delta}{\Delta}}}},
\end{equation}
where $\|\cdot\|$ denotes the $L_2$-norm.
In right-triangle $\triangle T_1$, we furthermore deduce that:

\begin{equation}
\tJ_\alpha'(p:q) = J_\alpha'(p:q) \cos\hat{a} = \rho_J(p,q) J_\alpha'(p:q).
\end{equation}

Scaling by factor $\frac{1}{\alpha(1-\alpha)}$, we end up with the following lemma:

\begin{lemma}
The total Jensen divergence $\tJ_\alpha$ is invariant to rotations of the coordinate system.
The divergence is mathematically expressed as a scaled skew Jensen divergence 
$\tJ_\alpha(p:q) = \rho_J(p,q) J_\alpha(p:q)$, where $\rho_J(p,q)=\sqrt{\frac{1}{1+\frac{\Delta_F^2}{\inner{\Delta}{\Delta}}}}$ is symmetric and independent of the skew factor.
\end{lemma}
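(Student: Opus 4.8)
The plan is to exploit the fact that, by construction, the total Jensen divergence is a genuine Euclidean distance in the ambient space $\X\times\mathbb{R}$, and then to evaluate that distance in closed form. First I would reduce the whole picture to two dimensions: the three points entering the construction, $(p,F(p))$, $(q,F(q))$ and $M'\defeq((pq)_\alpha,F((pq)_\alpha))$, all lie in the affine $2$-plane through $(q,F(q))$ spanned by the horizontal direction $p-q$ and the vertical axis, because $(pq)_\alpha$ lies on the line through $p$ and $q$ in $\X$. Inside this plane $\tJ_\alpha'(p:q)$ is, by definition, exactly the distance from $M'$ to the line $L$ through $(p,F(p))$ and $(q,F(q))$.

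The invariance claim then follows because a point-to-line distance is an intrinsic Euclidean quantity, hence preserved by any rigid motion of the plane. Made precise: a rotation $R$ of the coordinate system of $\X$, together with the induced relabelling $F\mapsto F\circ R^{-1}$ of the generator, acts on $\X\times\mathbb{R}$ through the isometry $(x,y)\mapsto(Rx,y)$ and carries the triple $((p,F(p)),(q,F(q)),M')$ onto the triple built from $(F\circ R^{-1},Rp,Rq)$; since that isometry maps $L$ to the corresponding line and preserves distances to it, $\tJ_\alpha'$ is unchanged, and rescaling by $1/(\alpha(1-\alpha))$ passes this invariance to $\tJ_\alpha$.

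For the closed form I would place $(q,F(q))$ at the origin of the $2$-plane with horizontal axis along $(p-q)/\|p-q\|$, so that $M'$ sits at horizontal coordinate $\alpha\|\Delta\|$ and height $F((pq)_\alpha)$ while $L$ passes through the origin with direction $(\|\Delta\|,\,F(p)-F(q))$. Putting $L$ in normal form and applying the point-to-line distance formula, the affine offset in the numerator collapses to $(F(p)F(q))_\alpha-F((pq)_\alpha)=J_\alpha'(p:q)$ and the denominator is $\sqrt{\inner{\Delta}{\Delta}+\Delta_F^2}$, giving $\tJ_\alpha'(p:q)=J_\alpha'(p:q)/\sqrt{1+\Delta_F^2/\inner{\Delta}{\Delta}}=\rho_J(p,q)\,J_\alpha'(p:q)$; equivalently, one runs the similar-triangles argument sketched above, in which the two vertical chords of lengths $J_\alpha'$ and $\Delta_F$ meeting $L$ force the right triangles $\triangle T_1$ and $\triangle T_2$ to share their angles, so $\cos\hat a=\|\Delta\|/\sqrt{\inner{\Delta}{\Delta}+\Delta_F^2}$. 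Multiplying through by $1/(\alpha(1-\alpha))$ then turns $J_\alpha'$ into $J_\alpha$ and $\tJ_\alpha'$ into $\tJ_\alpha$. The two remaining assertions are immediate from the resulting formula: $\rho_J(p,q)$ depends on $p$ and $q$ only through $\inner{\Delta}{\Delta}=\|p-q\|^2$ and $\Delta_F^2=(F(p)-F(q))^2$, both symmetric under $p\leftrightarrow q$, and no $\alpha$ appears in it.

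The computation here is routine; the one point I expect to require care is saying precisely what ``rotation of the coordinate system'' means, so that the invariance statement is neither vacuous nor false --- one must rotate the data space $\X$ (equivalently, rigidly rotate the $2$-plane carrying the graph of $F$), not perform an arbitrary rotation of $\X\times\mathbb{R}$, since the very definition of $M'$ still singles out the vertical direction. It is also worth recording, to avoid a spurious case split, that whether the foot of the perpendicular lands inside the segment $[(p,F(p)),(q,F(q))]$ or outside it (cf. Figure~\ref{fig:projoutside}) is irrelevant, because the distance-to-line formula holds in either case.
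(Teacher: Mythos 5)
The closed-form part of your argument is correct and is essentially the paper's own proof: the similar-triangles step you sketch is exactly the geometric argument of Figure~\ref{fig:gproof}, and your coordinate computation (distance from the graph point to the chord line) reproduces the analytic proof given in the appendix, including the remark that no case split is needed when the foot of the perpendicular falls outside the segment (Figure~\ref{fig:projoutside}). (Minor slip: with the origin placed at $(q,F(q))$, the height of the projected point is $F((pq)_\alpha)-F(q)$, not $F((pq)_\alpha)$; with that correction the numerator does collapse to $J'_\alpha(p:q)$ as you claim.) The symmetry and $\alpha$-independence of $\rho_J$ are immediate from the formula, as you say.

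Where you deviate, and where I think you have misread the statement, is the invariance claim. You allow only rotations $R$ of the data space $\X$, with the generator relabelled $F\mapsto F\circ R^{-1}$, and you explicitly rule out rotations of $\X\times\mathbb{R}$ on the grounds that the definition of the graph point singles out the vertical direction. Under your reading the assertion is vacuous: the ordinary skew Jensen divergence satisfies $J'_{F\circ R^{-1},\alpha}(Rp:Rq)=J'_{F,\alpha}(p:q)$ just as well, whereas the lemma (and Figure~\ref{fig:sj}, and the motivation inherited from total Bregman divergences) is precisely about a notion of rotation under which the vertical Jensen gap is \emph{not} preserved but the total Jensen divergence is. The intended rotations act on the coordinate system of the plot space $\X\times\mathbb{R}$, and the intended claim is the ``by construction'' one: $\tJ'_\alpha$ is the Euclidean length of the perpendicular dropped from the marked graph point onto the chord through $(p,F(p))$ and $(q,F(q))$, an intrinsic quantity of that configuration, hence unchanged when the whole plot (curve, chord and marked point) is rotated, while the vertical gap $J'_\alpha$ does change (the bold versus dotted segments of Figure~\ref{fig:sj}). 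Your own reduction to a point-to-line distance delivers exactly this at no extra cost. The subtlety you raise is genuine but concerns a different, stronger statement --- recomputing the construction with the same $\alpha$ after re-expressing the rotated curve as the graph of a new convex function --- which the lemma does not assert. So keep your computation, but state the invariance as invariance of the geometric measurement under rotations of $\X\times\mathbb{R}$ applied to the whole configuration, rather than restricting to rotations of $\X$ alone.
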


Observe that the scaling factor $\rho_J(p,q)$ is {\em independent} of $\alpha$, symmetric, and is always less or equal to $1$. 
Furthermore, observe that the scaling factor depending on both $p$ and $q$ and is {\em not separable}: That is, $\rho_J$ cannot be expressed as a product of two terms, one depending only on $p$ and the other depending only on $q$.
That is, $\rho_J(p,q)\not=\rho'(p)\rho'(q)$.
We have $\tJ_{1-\alpha}(p:q)=\rho_J(p,q) J_{1-\alpha}(p:q)=\rho_J(p,q)J_\alpha(q:p)$.
Because the conformal factor is independent of $\alpha$, we have the following asymmetric ratio equality:

\begin{equation}
\frac{\tJ_\alpha(p:q)}{\tJ_\alpha(q:p)}=\frac{J_\alpha(p:q)}{J_\alpha(q:p)}.
\end{equation}

By rewriting,

\begin{equation}
\rho_J(p,q)  = \sqrt{\frac{1}{1+s^2}},
\end{equation}
we intepret the non-separable {\em conformal factor} as a function of the square of the {\em chord slope} $s=\frac{\Delta_F}{\|\Delta\|}$.
Table~\ref{tab:tj} reports the conformal factors for the total Bregman and total Jensen divergences.

\begin{figure}
\centering
\includegraphics[width=0.7\textwidth]{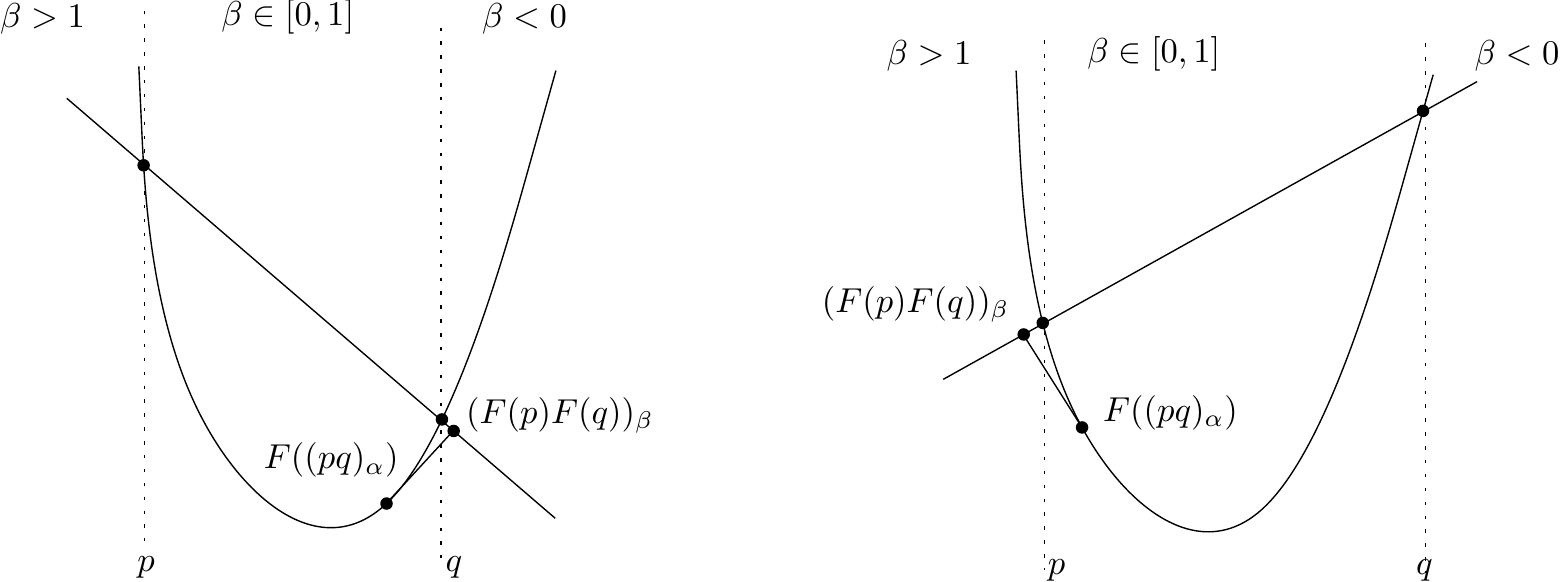}

\caption{The orthogonal projection of the point $((pq)_\alpha, F((pq)_\alpha)$ onto the line $((p,F(p)),(q,F(q)))$ at position $((pq)_\beta,(F(p)F(q))_\beta)$ may fall outside the line segment $[(p,F(p)),(q,F(q))]$. 
Left: The case where  $\beta<01$. Right: The case where $\beta>1$. (Using the convention $(F(p)F(q))_\beta=\beta F(p)+(1-\beta)F(q)$.)
\label{fig:projoutside}}
\end{figure}

\begin{table}
\centering

\begin{eqnarray*}
\tB(p:q) &=& \rho_B(q) B(p:q), \quad \rho_B(q)=\sqrt{\frac{1}{1+\inner{\nabla F(q)}{\nabla F(q)}}} \\
\tJ_\alpha(p:q) &=& \rho_J(p,q) J_\alpha(p:q),\quad \rho_J(p,q)=\sqrt{\frac{1}{1+\frac{(F(p)-F(q))^2}{\inner{p-q}{p-q}}}}\\
\end{eqnarray*}

$$
\begin{array}{|l|l|l|l|l|}\hline
\text{Name (information)}& \text{Domain $\X$} & \text{Generator $F(x)$} & \rho_B(q) & \rho_J(p,q)\\ \hline\hline
\text{Shannon} & \mathbb{R}^+ & x\log x-x  & \sqrt{\frac{1}{1+\log^2 q}} & \sqrt{\frac{1}{1+\left(\frac{p\log p+q\log q+p-q}{p-q}\right)^2}}\\
\text{Burg} & \mathbb{R}^+ & -\log x & \sqrt{\frac{1}{1+\frac{1}{q^2}}} &   \sqrt{\frac{1}{1+\left(\frac{\log \frac{q}{p}}{p-q}\right)^2}} \\
\text{Bit} & [0,1] & x\log x+(1-x)\log (1-x) &  \sqrt{\frac{1}{1+\log^2 \frac{q}{1-q}}} &  \sqrt{\frac{1}{1+\left(\frac{p\log p+(1-p)\log p -q\log q-(1-q)\log(1-q)}{p-q}\right)^2}} \\
\text{Squared Mahalanobis} & \mathbb{R}^d & \frac{1}{2} x^\top Q x & \sqrt{\frac{1}{1+\|Q q\|_2^2}} &   
\sqrt{\frac{1}{1+\left(\frac{p Q^\top p - q Q^\top q}{\|p-q\|_2}\right)^2}},\quad  Q\succ 0\\ \hline
\end{array}
$$
\caption{Examples of conformal factors for the total Bregman divergences ($\rho_B$) and the total skew Jensen divergences ($\rho_J$).
\label{tab:tj}}
\end{table}

\begin{remark}\label{rq:differenttJ}
We could have defined the total Jensen divergence in a different way by fixing the point on the line segment $((pq)_\alpha,(F(p)F(q))_\alpha) \in [(p,F(p)),(q,F(q))]$ and seeking the point $((pq)_\beta,F((pq)_\beta))$ on the function plot that ensures that the line segments
$[(p,F(p)),(q,F(q))]$ and $[((pq)_\beta,F((pq)_\beta)),((pq)_\alpha,(F(p)F(q))_\alpha)]$ are orthogonal. However, this yields a mathematical condition that may not always have a closed-form solution for solving for $\beta$. This explains why the other way around (fixing the point on the function plot and seeking the point on the line $[(p,F(p)),(q,F(q))]$ that yields orthogonality) is a better choice since we have a simple closed-form expression.
See Appendix~\ref{app:secondtotaldiv}.
\end{remark}

From a scalar divergence $D_F(x)$ induced by a generator $F(x)$, we can always build a {\em separable divergence} in $d$ dimensions by adding coordinate-wise the axis scalar divergences: 
\begin{equation}
D_F(x)=\sum_{i=1}^d D_F(x_i).
\end{equation}
This divergence corresponds to the induced divergence obtained by considering  the multivariate separable generator:
\begin{equation}
F(x)=\sum_{i=1}^d F(x_i).
\end{equation}

The Jensen-Shannon divergence~\cite{KDiv-LinWong-1990,LinJS-1991} is a separable Jensen divergence for the Shannon information generator $F(x)=x\log x -x$:

\begin{eqnarray}
\JS(p,q) &=& \frac{1}{2} \sum_{i=1}^d p_i\log \frac{2p_i}{p_i+q_i} + \frac{1}{2} \sum_{i=1}^d q_i\log \frac{2q_i}{p_i+q_i},\\
&=& J_{F(x)=\sum_{i=1}^d x_i\log x_i,\alpha=\frac{1}{2}}(p:q).
\end{eqnarray}

Although the Jensen-Shannon divergence is symmetric it is not a metric since it fails the triangular inequality.
However, its square root $\sqrt{\JS(p,q)}$ is a metric~\cite{FugledeTopsoe:2004}.
One may ask whether the conformal factor $\rho_{\mathrm{JS}}(p,q)$ destroys or preserves the metric property:

\begin{lemma}
The square root of the total Jensen-Shannon divergence is not a metric.
\end{lemma}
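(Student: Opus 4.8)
Because the conformal factor $\rho_J(p,q)$ is strictly positive, symmetric in $(p,q)$, and equal to $1$ exactly when $F(p)=F(q)$, the total Jensen-Shannon divergence $\tJ_{1/2}(p,q)=\rho_J(p,q)\,\JS(p,q)$ automatically inherits nonnegativity, the identity of indiscernibles and symmetry from $\JS$. Since $\sqrt{\JS}$ itself is a metric, it is precisely the conformal factor that we must blame: my plan is to exhibit a triple $p,q,r$ for which the triangle inequality $\sqrt{\tJ_{1/2}(p,r)}\le\sqrt{\tJ_{1/2}(p,q)}+\sqrt{\tJ_{1/2}(q,r)}$ fails.

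The feature I want to exploit is that $\rho_J(p,q)=\bigl(1+(F(p)-F(q))^2/\inner{p-q}{p-q}\bigr)^{-1/2}$ is \emph{not} multiplicative: it can equal $1$ on a ``long'' hop while being tiny on each of two ``short'' hops. I would work in the positive orthant $\X=(\mathbb{R}^+)^d$ with the separable Shannon generator $F(x)=\sum_{i=1}^d f(x_i)$, $f(x)=x\log x-x$, so that $\JS$ is separable, $\JS(p,q)=\sum_i\JS_1(p_i,q_i)$ with the scalar Jensen-Shannon divergence $\JS_1>0$ off the diagonal. Fix once and for all two per-coordinate values $0<a<1<b$ with $f(a)=f(b)$ (these exist because $f$ is strictly decreasing on $(0,1)$, strictly increasing on $(1,\infty)$, and $f(1)=-1$), and set $p=(a,\dots,a)$, $r=(b,\dots,b)$, $q=(1,\dots,1)$.

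Next I would track the dependence on the dimension $d$. Since $p$ and $r$ lie on the same level set of $F$ (indeed $F(p)=d\,f(a)=d\,f(b)=F(r)$), the long hop is uncontracted: $\rho_J(p,r)=1$, hence $\sqrt{\tJ_{1/2}(p,r)}=\sqrt{\JS(p,r)}=\sqrt d\,\sqrt{\JS_1(a,b)}=\Theta(\sqrt d)$. For the two short hops, $q$ sits at the unique minimiser of $F$, so $|F(p)-F(q)|=d\,(f(a)+1)$ and $|F(q)-F(r)|=d\,(f(b)+1)$ grow linearly in $d$ while $\|p-q\|=\sqrt d\,(1-a)$ and $\|q-r\|=\sqrt d\,(b-1)$ grow only like $\sqrt d$; therefore $\rho_J(p,q)=\Theta(d^{-1/2})$ and $\rho_J(q,r)=\Theta(d^{-1/2})$. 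As $\JS(p,q)=d\,\JS_1(a,1)$ and $\JS(q,r)=d\,\JS_1(b,1)$ are $\Theta(d)$, we get $\tJ_{1/2}(p,q),\tJ_{1/2}(q,r)=\Theta(\sqrt d)$, so $\sqrt{\tJ_{1/2}(p,q)}+\sqrt{\tJ_{1/2}(q,r)}=\Theta(d^{1/4})$. Hence the ratio of the two sides of the triangle inequality is $\Theta(d^{1/4})\to\infty$, and for $d$ large enough the inequality is violated, so $\sqrt{\tJ_{1/2}}$ is not a metric.

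The only genuine work is the asymptotic bookkeeping of the previous paragraph: verifying that $\rho_J$ really decays like $d^{-1/2}$ along the two legs --- which rests on $f(a)+1$ and $f(b)+1$ being \emph{fixed} positive constants, so that $F(p)-F(q)$ grows one power of $d$ faster than $\|p-q\|$ --- while each separable $\JS$ term grows only linearly, and that $\JS_1(a,b)>0$ is a fixed constant so the long leg really dominates. I expect this (routine) estimate to be the only obstacle. Equivalently, one may fix a single moderate dimension $d$ together with the values $a,b$ above and simply evaluate the three quantities, exhibiting one explicit numerical triangle-inequality violation.
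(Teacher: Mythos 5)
Your argument is correct, but it proceeds by a genuinely different route than the paper. The paper's proof is a bare explicit counterexample on the binary probability simplex: it takes $p=(0.98,0.02)$, $q=(0.52,0.48)$, $r=(0.006,0.994)$, evaluates $\sqrt{\mathrm{tJS}}$ on the three pairs numerically, and observes a triangle-inequality deficiency of about $0.043$. You instead give a structural, asymptotic construction: exploiting that $\rho_J$ is not multiplicative, you arrange the long hop to be uncontracted ($F(p)=F(r)$, so $\rho_J(p,r)=1$ and $\sqrt{\tJ_{1/2}(p,r)}=\Theta(\sqrt{d})$) while each short hop through the entropy minimiser has $\rho_J=\Theta(d^{-1/2})$, hence $\sqrt{\tJ_{1/2}(p,q)}+\sqrt{\tJ_{1/2}(q,r)}=\Theta(d^{1/4})$; your bookkeeping of these rates is right, so the violation ratio diverges. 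What your approach buys is an explanation of \emph{why} the conformal factor destroys the metric property (and it shows the violation can be made arbitrarily large, and adapts to other separable generators); what the paper's approach buys is a verifiable counterexample that lives where the Jensen--Shannon divergence canonically lives. That is the one caveat you should address: your points $p=(a,\dots,a)$, $q=(1,\dots,1)$, $r=(b,\dots,b)$ are not probability vectors, so as written you disprove the metric property for the unnormalized (positive-orthant) total Jensen divergence with Shannon generator; the lemma, read as the paper reads it (and as the cited Fuglede--Topsoe metric result for $\sqrt{\JS}$ is stated), concerns distributions on the simplex. Your closing sentence already points to the fix, but the explicit evaluation should be done with normalized points --- e.g.\ the paper's own triple on the $1$-simplex --- or your construction should be adapted to the simplex, where the same mechanism (large chord slope of the entropy near the boundary versus zero slope between equal-entropy points) still operates.
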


\def\tJS{\mathrm{tJS}}

It suffices to report a counterexample as follows:
Consider the three points of the $1$-probability simplex
 $p=(0.98,0.02)$, $q=(0.52,0.48)$ and $r=(0.006,0.994)$.
We have $d_1=\sqrt{\tJS(p,q)}\simeq 0.35128346734040883$,
$d_2=\sqrt{\tJS(q,r)}\simeq 0.39644485899866616$ and 
$d_3=\sqrt{\tJS(p,r)}\simeq  0.7906141593521927$.
The triangular inequality fails because $d_1+d_2<d_3$.
The triangular inequality deficiency is $d_3-(d_1+d_2) \simeq 0.04288583301311766$.

\subsection{Total Jensen divergences: Relationships with total Bregman divergences}

Although the underlying rationale for deriving the total Jensen divergences followed the same principle of the total Bregman divergences ({\em i.e.}, replacing the ``vertical'' projection by an orthogonal projection), the total Jensen divergences {\em do not coincide} with the total Bregman divergences in limit cases:
Indeed, in the limit cases $\alpha\in\{0,1\}$, we have:

\begin{eqnarray}
\lim_{\alpha\rightarrow 0} \tJ_\alpha(p:q) &=& \rho_J(p,q) B(p:q)\not = \rho_B(q) B(p:q) , \\
\lim_{\alpha\rightarrow 1} \tJ_\alpha(p:q) &=& \rho_J(p,q) B(q:p)\not = \rho_B(p) B(q:p),
\end{eqnarray}
since $\rho_J(p,q)\not =\rho_B(q)$.
Thus when $p\not=q$, the total Jensen divergence {\em does not tend} in limit cases to the total Bregman divergence.
However, by using a Taylor expansion  with exact Lagrange remainder, we write:

\begin{equation}
F(q)=F(p)+\inner{q-p}{\nabla F(\epsilon)},
\end{equation}
with $\epsilon\in[p,q]$ (assuming wlog. $p<q$).
That is, $\Delta_F =  F(q)-F(p) = \inner{\nabla F(\epsilon)}{\Delta}$.
Thus we have the squared slope index:
\begin{equation}
s^2 = \frac{\Delta_F^2}{\|\Delta\|^2} = \frac{\Delta^\top \nabla F(\eps) \Delta^\top \nabla F(\eps)}{\Delta^\top \Delta}= \inner{\nabla F(\eps)}{\nabla F(\eps)} =\|\nabla F(\eps)\|^2.
\end{equation}

Therefore when $p\simeq q$, we have $\rho_J(p,q) \simeq \rho_B(q)$, and the total Jensen divergence tends to the total Bregman divergence for any value of $\alpha$.
Indeed, in that case, the Bregman/Jensen conformal factors match:
\begin{equation}
\rho_J(p,q) = \frac{1}{ \sqrt{1+\inner{\nabla F(\epsilon)}{\nabla F(\epsilon)}}  }  = \rho_B(\epsilon),
\end{equation}
for $\epsilon\in[p,q]$. Note that for univariate generators, we find explicitly the value of $\eps$:
\begin{equation} 
\eps = \nabla F^{-1}\left(\frac{\Delta_F}{\Delta}\right) = \nabla F^{*}\left(\frac{\Delta_F}{\Delta}\right),
\end{equation}
where $F^*$ is the Legendre convex conjugate, see~\cite{2011-brbhat}. 
We recognize the expression of a Stolarsky mean~\cite{StolarskyMean-1975} of $p$ and $q$ for the strictly monotonous function $\nabla F$.

Therefore when $p\sim q$, we have  $\lim_{p\sim q}\frac{\Delta_F}{\Delta}\rightarrow \nabla F(q)$ and the total Jensen divergence
converges to the total Bregman divergence. 

\begin{lemma}
The total skew Jensen divergence $\tJ_\alpha(p:q)=\rho_J(p,q) J(p:q)$ can be equivalently rewritten as $\tJ_\alpha(p:q)=\rho_B(\eps) J(p:q)$ for $\eps\in [p,q]$.
In particular, when $p\simeq q$, the total Jensen divergences tend to the total Bregman divergences for any $\alpha$.
For $\alpha\in\{0,1\}$, the total Jensen divergences  is not equivalent to the total Bregman divergence when $p\not\simeq q$.
\end{lemma}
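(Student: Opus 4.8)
The plan is to stitch together three elementary observations that have already been set up in the text. First I would recall the defining expression $\tJ_\alpha(p:q)=\rho_J(p,q)J_\alpha(p:q)$ with $\rho_J(p,q)=1/\sqrt{1+\Delta_F^2/\inner{\Delta}{\Delta}}$, where $\Delta=q-p$ and $\Delta_F=F(q)-F(p)$. The only thing that changes between the three claims of the lemma is the analysis of $\rho_J$, so the core of the argument is a single identity plus a limiting computation.

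Second, I would apply the Taylor expansion with exact Lagrange remainder already displayed in the excerpt: for $p\neq q$ there exists $\eps\in[p,q]$ with $F(q)-F(p)=\inner{\nabla F(\eps)}{\Delta}$, hence
\begin{equation}
s^2=\frac{\Delta_F^2}{\|\Delta\|^2}=\frac{\bigl(\inner{\nabla F(\eps)}{\Delta}\bigr)^2}{\inner{\Delta}{\Delta}}.
\end{equation}
For univariate generators this is literally $\inner{\nabla F(\eps)}{\nabla F(\eps)}=\|\nabla F(\eps)\|^2$, giving $\rho_J(p,q)=1/\sqrt{1+\|\nabla F(\eps)\|^2}=\rho_B(\eps)$, which is the claimed rewriting $\tJ_\alpha(p:q)=\rho_B(\eps)J_\alpha(p:q)$. (In the multivariate case one either restricts the generator to the line through $p$ and $q$, reducing to the scalar statement, or notes that the mean-value form of $\rho_J$ still matches $\rho_B$ evaluated at the appropriate interior point along that segment; I would state it in the form the paper uses, namely for $\eps\in[p,q]$ on the segment.)

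Third, the limit statement $p\simeq q$ follows by continuity: as $q\to p$ the interior point $\eps$ is squeezed to $p$ (equivalently $q$), so $\rho_B(\eps)\to\rho_B(p)$ and therefore $\tJ_\alpha(p:q)$ agrees to leading order with $\rho_B(p)J_\alpha(p:q)$, i.e. with the total Bregman divergence, uniformly in $\alpha$ since $\rho_J$ does not depend on $\alpha$ (this independence is exactly the content of the first lemma). For the last sentence, the non-equivalence for $\alpha\in\{0,1\}$ when $p\not\simeq q$, I would invoke the displayed limits $\lim_{\alpha\to0}\tJ_\alpha(p:q)=\rho_J(p,q)B(p:q)$ versus $\tB(p:q)=\rho_B(q)B(p:q)$ and simply exhibit that $\rho_J(p,q)\neq\rho_B(q)$ in general — e.g. pick any strictly convex $F$ and two points for which the chord slope $\Delta_F/\|\Delta\|$ differs from $\|\nabla F(q)\|$, which happens precisely when the mean-value point $\eps$ is not $q$, i.e. for essentially all $p\neq q$.

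The main obstacle is really just the bookkeeping in the multivariate case: the Lagrange remainder identity $F(q)-F(p)=\inner{\nabla F(\eps)}{\Delta}$ does hold along the segment, but the step from $\bigl(\inner{\nabla F(\eps)}{\Delta}\bigr)^2/\inner{\Delta}{\Delta}$ to $\|\nabla F(\eps)\|^2$ is only valid when $\nabla F(\eps)$ is parallel to $\Delta$, which is automatic for $d=1$ but not for $d>1$. I expect the cleanest fix — and the one consistent with the paper's phrasing ``$\rho_B(\eps)$ for $\eps\in[p,q]$'' — is to reduce to the one-dimensional picture by restricting $F$ to the line spanned by $p$ and $q$ (as the geometric construction in Figure~\ref{fig:sj} already does), where the gradient along that line is the relevant scalar derivative, so no parallelism issue arises; everything else is continuity and the already-established $\alpha$-independence of the conformal factor.
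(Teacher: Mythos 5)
Your proof follows essentially the same route as the paper: the Lagrange-remainder identity $\Delta_F=\inner{\nabla F(\eps)}{\Delta}$ to identify $\rho_J(p,q)$ with $\rho_B(\eps)$, continuity of $\eps$ toward $p$ for the limit $p\simeq q$ (uniform in $\alpha$ since $\rho_J$ is $\alpha$-free), and the displayed limits $\lim_{\alpha\to 0}\tJ_\alpha(p:q)=\rho_J(p,q)B(p:q)$ versus $\rho_B(q)B(p:q)$ for the non-equivalence, so the argument is correct and matches the paper's. Your multivariate caveat is well founded and in fact sharper than the paper, whose own step $\frac{(\Delta^\top\nabla F(\eps))^2}{\Delta^\top\Delta}=\|\nabla F(\eps)\|^2$ silently assumes $\nabla F(\eps)$ parallel to $\Delta$; note, though, that even after restricting $F$ to the chord the quantity you obtain is $1/\sqrt{1+(\inner{\nabla F(\eps)}{\Delta}/\|\Delta\|)^2}$, i.e.\ the directional derivative along the chord rather than $\|\nabla F(\eps)\|$, so the literal identity $\rho_J(p,q)=\rho_B(\eps)$ is rigorous only for univariate (separable, coordinate-wise) generators — a limitation of the lemma as stated, not of your argument.
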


\begin{remark}
Let the chord slope $s(p,q)=\frac{F(q)-F(p)}{\|q-p\|}= \|\nabla F(\eps)\|$ for $\eps\in[pq]$ (by the mean value theorem).
It follows that $s(p,q)\in [\min_\eps \|\nabla F(\eps)\|, \max_\eps \|\nabla F(\eps)\|]$.
Note that since $F$ is strictly convex, $\nabla F$ is strictly increasing and we can approximate arbitrarily finely using a simple 1D dichotomic search the value of $\eps$:
$\eps_0=p+\lambda(q-p)$ with $\lambda\in[0,1]$ and $\frac{F(q)-F(p)}{\|q-p\|} \sim \|\nabla F(\eps_0)\|$.
In 1D, it follows from the strict increasing monotonicity of $F'$ that $s(p,q)\in [|F'(p)|, |F'(q)|]$ (for $p\leq q$).
\end{remark}

\begin{remark}
Consider the chord slope $s(p,q)=\frac{F(q)-F(p)}{\|q-p\|}$ with one fixed extremity, say $p$.
When $q\rightarrow \pm\infty$, using l'Hospital rule, provided that $\lim \nabla F(q)$ is bounded, we have 
$\lim_{q\rightarrow\infty} s(p,q) = \nabla F(q)$.
Therefore, in that case, $\rho_J(p,q)\simeq \rho_B(q)$. Thus when $q\rightarrow\pm\infty$, we have $\tJ_\alpha(p:q)\rightarrow \rho_B(q) J_\alpha(p:q)$. In particular, when $\alpha\rightarrow 0$ or $\alpha\rightarrow 1$, the total Jensen divergence tends to the total Bregman divergence. 
Note that when $q\rightarrow \pm\infty$, we have $J_\alpha(p:q)\simeq \frac{1}{\alpha(1-\alpha)} ((1-\alpha)F(q)-F((1-\alpha)q))$.
\end{remark}

We finally close this section by noticing that total Jensen divergences may be interpreted as a kind of Jensen divergence with the generator function scaled by the conformal factor:

\begin{lemma}\label{lemma:tJequalJ}
The total Jensen divergence $\tJ_{F,\alpha}(p:q)$ is equivalent to a Jensen divergence for the convex generator $G(x)=\rho_J(p,q)F(x)$:
$\tJ_{F,\alpha}(p:q) = J_{\rho_J(p,q)F,\alpha}(p:q)$.
\end{lemma}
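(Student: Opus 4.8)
The plan is to reduce the claim to the elementary homogeneity of (skew) Jensen divergences under positive rescaling of the generator, a fact already recorded in Section~\ref{sec:jd}. First I would fix the pair of distinct points $p,q$, so that the conformal factor $\rho_J(p,q)=\sqrt{1/(1+\Delta_F^2/\inner{\Delta}{\Delta})}$ is a \emph{constant} with respect to the free variable $x$, and moreover a strictly positive one (it lies in $(0,1]$). Setting $\lambda\defeq\rho_J(p,q)>0$ and $G(x)\defeq\lambda F(x)$, I would observe that $G$ is a bona fide generator: it is strictly convex and differentiable on $\X$ since $\nabla^2 G=\lambda\nabla^2 F\succ 0$, exactly the ``positive rescaling preserves strict convexity'' remark used above. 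Hence $J_{G,\alpha}$ is well defined.

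Next I would invoke the scaling identity $J'_{\alpha,\lambda F+c}(p:q)=\lambda J'_{\alpha,F}(p:q)$, which is immediate from the definition of $J'_\alpha$ because each of the three evaluations of the generator is multiplied by $\lambda$. Dividing by $\alpha(1-\alpha)$ carries it to $J_{\lambda F,\alpha}$ for $\alpha\notin\{0,1\}$; at the endpoints $\alpha\in\{0,1\}$ the same linearity of $B(p:q)=F(p)-F(q)-\inner{p-q}{\nabla F(q)}$ in $F$ gives $B_{\lambda F}=\lambda B_F$, so in all cases $J_{\lambda F,\alpha}(p:q)=\lambda\,J_{F,\alpha}(p:q)$. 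Substituting $\lambda=\rho_J(p,q)$ and recalling that $\tJ_{F,\alpha}(p:q)=\rho_J(p,q)\,J_{F,\alpha}(p:q)$ by definition,
\[
J_{\rho_J(p,q)F,\alpha}(p:q)=\rho_J(p,q)\,J_{F,\alpha}(p:q)=\tJ_{F,\alpha}(p:q),
\]
which is the asserted identity.

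There is essentially no hard step here; the only point that deserves care in the write-up is conceptual rather than technical: the generator $G=\rho_J(p,q)F$ depends on the pair $(p,q)$ at which we evaluate, so the lemma should not be read as claiming that one fixed convex function reproduces $\tJ_\alpha$ globally --- it asserts the pointwise coincidence with the ordinary Jensen divergence of the locally rescaled generator. I would state this dependence explicitly, and also note that only positivity of the conformal factor (not the bound $\rho_J\le 1$) is used, so the argument applies verbatim to any positive conformal factor.
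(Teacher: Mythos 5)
Your proof is correct and is exactly the argument the paper implicitly relies on (the lemma is stated there without proof, as an immediate consequence of the linearity of $J'_\alpha$ in the generator together with the definition $\tJ_{F,\alpha}=\rho_J(p,q)J_{F,\alpha}$). Your explicit caveat that $G=\rho_J(p,q)F$ depends on the evaluation pair $(p,q)$ --- so the identity is pointwise rather than a global reparametrization by a single convex function --- is a worthwhile clarification consistent with the paper's phrasing ``a kind of Jensen divergence.''
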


%
%
%
%

\begin{figure}
\centering

\includegraphics[width=0.7\textwidth]{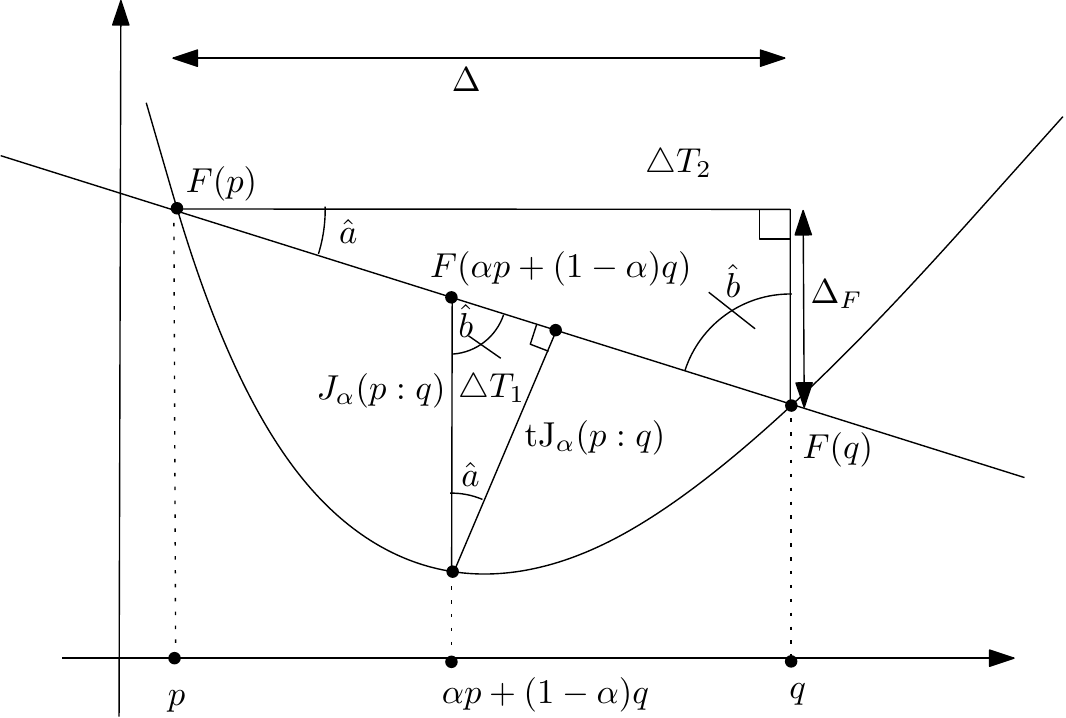}

\caption{Geometric proof for the total Jensen divergence: The figure illustrates the two right-angle triangles $\triangle T_1$ and $\triangle T_2$. We deduce that angles $\hat{b}$ and $\hat{a}$ are respectively identical from which the formula on $\mathrm{tJ}_\alpha(p:q)=J_\alpha(p:q)\cos\hat{a}$ follows immediately. \label{fig:gproof}}

\end{figure}

\section{Centroids and robustness analysis\label{sec:centroid}}

\subsection{Total Jensen centroids}
Thanks to the invariance to rotations, total Bregman divergences proved highly useful in applications (see~\cite{DTMRI-2011,2012-tBDShapeRetrieval,tBD-Boosting-2011,tBD-tracking-2013,tBD-Tensor-2011}, etc.) due to the statistical robustness of their centroids. The conformal factors play the role of regularizers.
Robustness of the centroid, defined as a notion of centrality robust to ``noisy'' perturbations, is studied using the framework of the {\em influence function}~\cite{Hampel-IF-1986}.
Similarly, the total skew Jensen (right-sided) {\em centroid} $c_\alpha$ is defined for a finite weighted point set as the minimizer of the following loss function:

\begin{eqnarray}
L(x;w) &=& \sum_{i=1}^n  w_i \times \tJ_\alpha(p_i:x),\\
c_\alpha &=& \arg\min_{x\in\mathcal{X}} L(x;w) , \label{eq:centroid}
\end{eqnarray}
where $w_i\geq 0$ are the normalized point weights (with $\sum_{i=1}^n w_i=1$).
The left-sided centroids $c_\alpha'$ are obtained by minimizing the equivalent right-sided centroids for $\alpha'=1-\alpha$: $c_\alpha'=c_{1-\alpha}$ (recall that the conformal factor does not depend on $\alpha$). Therefore, we consider the right-sided centroids in the remainder.

To minimize Eq.~\ref{eq:centroid}, we proceed iteratively in two stages: 
\begin{itemize}

\item First, we consider $c^{(t)}$ (initialized with the barycenter $c^{(0)}=\sum_i w_ip_i$)  given. 
This allows us to consider the following simpler 
minimization problem:

\begin{equation}\label{eq:centroidfixed}
c = \arg\min_{x\in\mathcal{X}} \sum_{i=1}^n  w_i \times \rho(p_i,c^{(t)})  J_\alpha(p_i:x).
\end{equation}

Let 
\begin{equation}
w_i^{(t)}= \frac{ w_i \times \rho(p_i,c^{(t)})}{\sum_j w_j \times \rho(p_j,c^{(t)})},
\end{equation}
be the updated renormalized weights at stage $t$.

\item Second, we   minimize:

\begin{equation}\label{eq:centroidfixedweight}
c = \arg\min_{x\in\mathcal{X}} \sum_{i=1}^n   w_i^{(t)}  J_\alpha(p_i:x).
\end{equation}
This is a convex-concave minimization procedure~\cite{cccp-2003} (CCCP) that can be solved iteratively until it reaches convergence~\cite{2011-brbhat} at $c^{(t+1)}$. That is, we iterate the following formula~\cite{2011-brbhat} a given number of times $k$:

\begin{equation}
c^{t,l+1} \leftarrow (\nabla F)^{-1} \left( \sum_{i=1}^n w_i^{(t)} \nabla F(\alpha c^{t,l}+(1-\alpha) p_i) \right)
\end{equation}

We set the new centroid $c^{(t+1)}=c^{t,k}$, and we loop back to the first stage until the loss function improvement $L(x;w)$ goes below a prescribed threshold. (An implementation in Java\texttrademark{} is available upon request.)
\end{itemize}

Although the CCCP algorithm is guaranteed to converge {\em monotonically} to a local optimum, the two steps weight update/CCCP does not provide anymore of the  monotonous converge as we have attested in practice.

It is an open and challenging problem to prove that the total Jensen centroids are unique whatever the chosen multivariate generator, see~\cite{2011-brbhat}.

In Section~\ref{sec:cluster}, we shall show that a careful initialization by picking centers from the dataset is enough to guarantee a probabilistic bound on the $k$-means clustering with respect to total Jensen divergences {\em without} computing the centroids. 

\subsection{Centroids: Robustness analysis\label{sec:robustness}}

The centroid defined with respect to the total Bregman divergence has been shown to be robust to  outliers whatever the chosen generator~\cite{DTMRI-2011}.
We first analyze the robustness for the symmetric Jensen divergence (for $\alpha=\frac{1}{2}$).
We investigate the {\em influence function}~\cite{Hampel-IF-1986} $i(y)$ on the centroid when adding an outlier point $y$ with prescribed weight $\epsilon>0$.
Without loss of generality, it is enough to consider two points: One {\em outlier} with $\epsilon$ mass and one {\em inlier} with the remaining mass.
Let us add an outlier point $y$ with weight $\epsilon$ onto an inliner point  $p$.
Let $\bar{x}=p$ and $\tilde{x}=p+\epsilon z$ denote the centroids before adding $y$ and after adding $y$. 
$z=z(y)$ denotes the influence function.
The Jensen centroid minimizes (we can ignore dividing by the renormalizing total weight inlier+outlier: $\frac{1}{1+\epsilon}$):

$$
L(x) \equiv  J(p,x) + \eps J(x,y).
$$

The derivative of this energy is:

$$
D(x) = L'(x) = J'(p,x) + \eps J'(y,x).
$$

The derivative of the Jensen divergence is given by (not necessarily a convex distance):
$$
J'(h,x)=\frac{1}{2} f'(x)-\frac{1}{2} f'\left(\frac{x+h}{2}\right),
$$
where $f$ is the univariate convex generator and $f'$ its derivative.
For the optimal value of the centroid $\tx$, we have $D(\tx)=0$, yielding:

\begin{equation}
(1+\eps)  f'(\tx) - \left(f'\left(\frac{\tx+p}{2}\right) + \eps f'\left(\frac{\tx+y}{2}\right) \right) =0.
\end{equation}

Using Taylor expansions on $\tx=p+\eps z$  (where $z=z(y)$ is the influence function) on the derivative $f'$, we get :

$$
f'(\tx) \simeq f'(p)+\eps z f''(p),
$$
and

$$
(1+\eps)(f'(p)+\eps z f''(p)) - \left(f'(p)+\frac{1}{2}\eps z f''(p) + \eps f'\left(\frac{p+y}{2}\right)\right)
$$
(ignoring the term in $\eps^2$ for small constant $\eps>0$ in the Taylor expansion term of $\eps f'$.)

Thus we get the following mathematical equality:

$$
z((1+\eps)\eps f''(p) -\eps z/2 f''(p)) = f'(p)+\eps f'\left(\frac{p+y}{2}\right) - (1+\eps) f'(p)
$$

Finally, we get the expression of the influence function:

\begin{equation}
z = z(y) = 2 \frac{f'(\frac{p+y}{2})-f'(p)}{f''(p)},
\end{equation}
for small prescribed $\epsilon>0$.

\begin{theorem}
The Jensen centroid is robust for a strictly convex and smooth generator $f$ if 
$|f'(\frac{p+y}{2})|$ is bounded on the  domain $\X$ for any prescribed $p$.
\end{theorem}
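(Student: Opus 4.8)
The plan is to read the claim directly off the closed-form expression of the influence function derived just above, namely $z(y) = 2\,\frac{f'(\frac{p+y}{2}) - f'(p)}{f''(p)}$, together with the observation that boundedness of $z$ over all admissible outlier locations $y$ is exactly the robustness criterion (finite gross-error sensitivity, i.e. B-robustness in the sense of Hampel~\cite{Hampel-IF-1986}).

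First I would fix the inlier $p\in\X$ once and for all. Since $f$ is smooth and strictly convex, $f''(p)$ is a well-defined strictly positive constant and $f'(p)$ a finite constant, neither depending on $y$; hence, up to the fixed positive factor $2/f''(p)$ and the additive constant $-2f'(p)/f''(p)$, the map $y\mapsto z(y)$ is nothing but $y\mapsto f'(\tfrac{p+y}{2})$. Second, invoking the hypothesis that $|f'(\tfrac{p+y}{2})|$ is bounded over $y\in\X$ for the prescribed $p$, I would conclude $\sup_{y\in\X}|z(y)| \leq \frac{2}{f''(p)}\bigl(\sup_{y\in\X}|f'(\tfrac{p+y}{2})| + |f'(p)|\bigr) < \infty$, which is precisely the statement that the influence function is bounded, i.e.\ the centroid is robust. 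I would also remark, although it is not needed for the stated implication, that the condition is essentially tight: if $f'$ grew without bound along a sequence $y_k$ with $\tfrac{p+y_k}{2}$ approaching the boundary of $\X$ or escaping to infinity, then $|z(y_k)|\to\infty$ and robustness would fail.

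The main point requiring care is not the algebra but the qualifications carried over from the derivation: the formula for $z(y)$ was obtained by a first-order expansion in the outlier mass $\eps$ (terms of order $\eps^2$ were dropped), so the assertion is really about the asymptotic influence function as $\eps\to 0^+$, and I would state that explicitly rather than leave it implicit. I would similarly note that one uses convexity of $\X$ (so that $\tfrac{p+y}{2}\in\X$ whenever $p,y\in\X$, making $f'(\tfrac{p+y}{2})$ meaningful) and smoothness of $f$ to legitimize the Taylor expansions of $f'$ about $p$ that underlie the derivation. With these bookkeeping points in place the argument is immediate, so the expected ``obstacle'' is only one of careful statement, not of technique.
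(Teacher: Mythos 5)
Your argument is exactly the paper's: the theorem is read off the influence-function formula $z(y)=2\,\frac{f'(\frac{p+y}{2})-f'(p)}{f''(p)}$ derived just before it, with $f''(p)>0$ and $f'(p)$ fixed, so boundedness of $|f'(\frac{p+y}{2})|$ in $y$ gives boundedness of $z$ and hence robustness. Your extra remarks on the first-order-in-$\epsilon$ nature of the expansion and the convexity of $\X$ are sensible clarifications but do not change the route.
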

Note that this theorem extends to separable divergences. 
To illustrate this theorem, let us consider two   generators  that shall yield a non-robust and a robust Jensen centroid, respectively:

\begin{itemize}

\item Jensen-Shannon: $\X=\mathbb{R}^+$, $f(x)=x\log x-x$ ,$f'(x)=\log(x)$, $f''(x)=1/x$.

We check that $|f'(\frac{p+y}{2})| = |\log \frac{p+y}{2}| $ is unbounded when $y\rightarrow +\infty$. 
The influence function $z(y) = 2 p\log \frac{p+y}{2p}$ is unbounded when $y\rightarrow \infty$, and therefore the centroid is not robust to outliers.

\item Jensen-Burg: $\X=\mathbb{R}^+$, $f(x)=-\log x$, $f'(x)=-1/x$, $f'' (x)=\frac{1}{x^2}$

We check that $|f'(\frac{p+y}{2})| = | \frac{2}{p+y} |$ is always bounded for $y\in (0,+\infty)$. 

$$
z(y) = 2 p^2 \left( \frac{1}{p} - \frac{2}{p+y}\right)
$$

When $y\rightarrow \infty$, we have $|z(y)| \rightarrow 2p<\infty$.
The influence function is bounded and the centroid is robust.
 
\end{itemize}

\begin{theorem}
Jensen centroids are not always robust (e.g., Jensen-Shannon centroids).
\end{theorem}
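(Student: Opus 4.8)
The plan is to prove the statement by exhibiting an explicit counterexample, namely the Jensen--Shannon divergence, for which the influence function computed just above fails to be bounded. Since the claim only asserts that Jensen centroids are \emph{not always} robust, a single generator with an unbounded influence function suffices; the Jensen--Burg example treated in the text supplies a companion generator whose centroid \emph{is} robust, so ``not always'' is the sharp qualifier and both halves of the dichotomy are in hand.

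First I would recall the closed form of the influence function derived earlier, $z(y) = 2\,(f'(\tfrac{p+y}{2})-f'(p))/f''(p)$, valid to first order in the outlier mass $\epsilon$; this is precisely the linearization underlying Hampel's notion of robustness, so a centroid is non-robust exactly when $z(y)$ is unbounded as the outlier $y$ ranges over the domain. Then I would substitute the Shannon information generator $f(x)=x\log x-x$ on $\mathcal{X}=\mathbb{R}^{+}$, for which $f'(x)=\log x$ and $f''(x)=1/x$, obtaining
\begin{equation}
z(y) = 2p\,\bigl(\log\tfrac{p+y}{2}-\log p\bigr) = 2p\log\frac{p+y}{2p}.
\end{equation}
The right-hand side diverges to $+\infty$ as $y\to+\infty$ for every fixed inlier $p>0$, so a single outlier with arbitrarily large coordinate displaces the centroid by an unbounded amount; hence the Jensen--Shannon centroid is not robust, while the analogous computation for $f(x)=-\log x$ gives $z(y)=2p^{2}(\tfrac1p-\tfrac{2}{p+y})\to 2p$, bounded.

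To lift the univariate statement to the Jensen--Shannon divergence on the probability simplex, I would invoke the fact that the Shannon generator is separable, $F(x)=\sum_i x_i\log x_i - x_i$, so that the influence function acts coordinate-wise and unboundedness in any single coordinate already defeats robustness; the preceding theorem on Jensen centroids was explicitly noted to extend to separable divergences, which legitimizes this reduction. I do not anticipate a genuine obstacle: the only point that deserves care is to state explicitly that ``robustness'' is understood in the linearized, influence-function sense of Hampel, so that the unboundedness of $z(y)$ is by definition the failure of robustness rather than something requiring a separate argument; beyond that, the proof is a one-line substitution into the formula already established.
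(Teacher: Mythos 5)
Your proposal is correct and follows essentially the same route as the paper: substituting the Shannon generator $f(x)=x\log x - x$ into the influence-function formula $z(y)=2\bigl(f'(\tfrac{p+y}{2})-f'(p)\bigr)/f''(p)$, observing that $z(y)=2p\log\tfrac{p+y}{2p}$ is unbounded as $y\to+\infty$ (hence non-robustness in Hampel's sense), and contrasting with the bounded Burg case. This is precisely the paper's argument, so no further comment is needed.
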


Consider the total Jensen-Shannon centroid. Table~\ref{tab:tj} reports the total Bregman/total Jensen conformal factors.
Namely, $\rho_B(q)=\sqrt{\frac{1}{1+\log^2 q}}$ and
$\rho_J(p,q)=\sqrt{\frac{1}{1+\left(\frac{p\log p+q\log q+p-q}{p-q}\right)^2}}$.
Notice that the total Bregman centroids  have been proven to be robust (the left-sided $t$-center in~\cite{DTMRI-2011}) whatever the chosen generator.

For the total Jensen-Shannon centroid (with $\alpha=\frac{1}{2}$), the conformal factor of the outlier point tends to:

\begin{equation}
\lim_{y\rightarrow +\infty} \rho_J(p,y) \simeq \frac{1}{\log y}. 
\end{equation}

It is an ongoing investigation to prove (1) the uniqueness of Jensen centroids, and (2) the robustness of total Jensen centroids whathever the chosen generator (see the conclusion).


\subsection{Clustering: No closed-form centroid, no cry!\label{sec:cluster}}

The most famous clustering algorithm is $k$-means~\cite{Lloyd-1957} that consists in first initializing $k$ distinct seeds (set to be the initial cluster centers) and then iteratively assign the points to their closest center, and update the cluster centers by taking the centroids of the clusters.
A breakthrough was achieved by proving the a randomized seed selection, $k$-means++~\cite{kmeansplusplus-2007}, 
guarantees probabilistically a constant approximation factor to the optimal loss.
The $k$-means++ initialization may be interpreted as a discrete $k$-means where the $k$ cluster centers are choosen among the input.
This yields ${n \choose k}$ combinatorial seed sets. 
Note that $k$-means is NP-hard when $k=2$ and the dimension is not fixed, but not discrete $k$-means~\cite{Kmeans2D-NPHard-2013}.
Thus we do not need to compute centroids to cluster with respect to total Jensen divergences. 
Skew Jensen centroids can be approximated arbitrarily finely using the concave-convex procedure, as reported in~\cite{2011-brbhat}.

\begin{lemma}
On a compact domain $\cal{X}$, we have 
$\rho_{\min} J(p:q) \leq  \tJ(p:q) \leq  \rho_{\max} J(p:q)$,
with $\rho_{\min} = \min_{x\in\X} \frac{1}{\sqrt{1+\inner{\nabla F(x)}{\nabla F(x)}}}$
and $\rho_{\max} = \max_{x\in\X} \frac{1}{\sqrt{1+\inner{\nabla F(x)}{\nabla F(x)}}}$
\end{lemma}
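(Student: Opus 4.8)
The plan is to reduce the claim to the reformulation $\tJ_\alpha(p:q)=\rho_B(\eps)\,J_\alpha(p:q)$ for some $\eps\in[p,q]$, established in the preceding lemma, and then to bound $\rho_B(\eps)$ uniformly over the compact domain by a Weierstrass extreme value argument. First I would recall the computation already carried out above: by the mean value theorem applied to $F$ along the segment $[p,q]$ one has $\Delta_F=F(q)-F(p)=\inner{\nabla F(\eps)}{\Delta}$ with $\Delta=q-p$ and $\eps\in[p,q]$, so the squared chord slope satisfies $s^2=\Delta_F^2/\|\Delta\|^2=\inner{\nabla F(\eps)}{\nabla F(\eps)}$. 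Consequently the conformal factor of $\tJ_\alpha$ is $\rho_J(p,q)=1/\sqrt{1+\inner{\nabla F(\eps)}{\nabla F(\eps)}}=\rho_B(\eps)$, which is independent of $\alpha$.

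Second, since $F$ is smooth, the map $x\mapsto 1/\sqrt{1+\inner{\nabla F(x)}{\nabla F(x)}}$ is continuous on $\X$; because $\X$ is compact (and convex, being the domain of a convex generator, so that $[p,q]\subseteq\X$ and in particular $\eps\in\X$), this map attains its minimum $\rho_{\min}$ and its maximum $\rho_{\max}$ on $\X$. Hence $\rho_{\min}\le\rho_B(\eps)\le\rho_{\max}$, that is, $\rho_{\min}\le\rho_J(p,q)\le\rho_{\max}$.

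Finally, multiplying the chain of inequalities through by $J(p:q)\ge 0$ — nonnegativity holding by the identity of indiscernibles that follows from the strict convexity of $F$ — and using $\tJ(p:q)=\rho_J(p,q)\,J(p:q)$ yields $\rho_{\min}J(p:q)\le\tJ(p:q)\le\rho_{\max}J(p:q)$; the same double inequality holds verbatim for every skew parameter $\alpha$ since $\rho_J$ does not depend on $\alpha$.

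The argument presents no real obstacle: the only points that need care are that the intermediate point $\eps$ lies in the region over which $\rho_{\min}$ and $\rho_{\max}$ are computed (guaranteed by convexity of $\X$), and that compactness is precisely what upgrades the infimum/supremum of $\rho_B$ to an attained minimum/maximum, so that the stated bounds are meaningful and tight in the sense that $\rho_{\min},\rho_{\max}$ are realized on $\X$.
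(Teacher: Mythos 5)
Your argument is correct and is essentially the paper's own route: the lemma is stated there without explicit proof, as an immediate consequence of the earlier identity $\tJ_\alpha(p:q)=\rho_B(\eps)\,J_\alpha(p:q)$ with $\eps\in[p,q]\subseteq\X$, combined with continuity of $x\mapsto 1/\sqrt{1+\inner{\nabla F(x)}{\nabla F(x)}}$ on the compact convex domain (so the extrema are attained) and $J_\alpha(p:q)\geq 0$. (The only reservation concerns the multivariate case, where the mean value theorem gives $s^2=\inner{\nabla F(\eps)}{\Delta/\|\Delta\|}^2\leq \|\nabla F(\eps)\|^2$ by Cauchy--Schwarz, so the identification $\rho_J(p,q)=\rho_B(\eps)$ is exact only in one dimension or when $\nabla F(\eps)$ is aligned with $\Delta$; but this caveat is inherited from the paper's own lemma that you invoke, not introduced by your proof.)
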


We are given a set ${\mathcal{S}}$ of points that we wish to cluster
in $k$ clusters, following a hard clustering assignment. We let $\tja(A:y) = \sum_{x \in A} {\tja(x:y)}$ for
any $A \subseteq S$. The optimal total hard clustering Jensen
potential is $\tja^{\mathrm{opt}} = \min_{C \subseteq S : |C| =
k} \tja(C)$, where $\tja(C) = \sum_{x \in S} \min_{c \in
  C} \tja(x:c)$. Finally,
the contribution of some $A \subseteq S$ to the optimal total Jensen potential
having centers $C$ is $\tjopta(A) = \sum_{x\in A} \min_{c \in
  C} \tja(x:c)$. 

Total Jensen seeding picks randomly without replacement an element $x$
in $S$ with probability proportional to $\tja(C)$, where $C$ is the
current set of centers. When $C=\emptyset$, the distribution is
uniform. We let total Jensen seeding refer to this biased randomized
way to pick centers.

We state the theorem that applies for any kind of divergence:

\begin{theorem}\label{th1}
Suppose there exist some $U$ and $V$ such that, $\forall
x, y, z$:
\begin{eqnarray}
\tja(x:z) & \leq & U(\tja(x:y) + \tja(y:z))\:\:, \mbox{ (triangular
  inequality)} \label{ti}\\
\tja(x:z) & \leq & V\tja(z:x) \:\:, \mbox{ (symmetric
  inequality)} \label{si}
\end{eqnarray}
Then the average potential of total Jensen seeding with $k$ clusters satisfies $E[\tja] \leq
2 U^2(1+V) (2+\log k)\tjopta$, where $\tjopta$ is the minimal total
Jensen potential achieved by a clustering in $k$ clusters.
\end{theorem}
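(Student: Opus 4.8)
The plan is to follow the now-classical analysis of $k$-means++ due to Arthur and Vassilvitskii, but carried out abstractly using only the two structural hypotheses~\eqref{ti} and~\eqref{si} on the divergence $\tja$. The argument proceeds by induction on the number of centers chosen, and splits into three lemmas that I would prove in order.

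First, I would bound the expected potential contributed by the \emph{first} chosen center. Since the first center $x_0$ is drawn uniformly from a cluster $A$ of the optimal solution, the expected contribution $E[\tja(A : x_0)]$ should be comparable to $\tjopta(A)$ up to the constant $U^2(1+V)$: for any two points $x, x_0 \in A$ one writes $\tja(x : x_0) \le U(\tja(x : c_A^*) + \tja(c_A^* : x_0))$ where $c_A^*$ is the optimal center of $A$, applies~\eqref{si} to flip $\tja(c_A^* : x_0) \le V \tja(x_0 : c_A^*)$, and then averages over $x_0$ uniform in $A$. This yields the analogue of the ``first center is good in expectation'' lemma with the stated constant. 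Second, and this is the crux, I would prove that when a new center is picked with probability proportional to the current potential $\tja(\cdot)$ restricted to a \emph{fresh} optimal cluster $A$ (one not yet ``covered''), the expected potential of $A$ after adding that center is at most $U^2(1+V)$ times $2\,\tjopta(A)$ times a correction depending on the current potential of $A$. The standard trick is: condition on the new center $x_0$ landing in $A$; for every other point $x \in A$ the new potential is $\min(\tja(x:\text{old centers}), \tja(x:x_0))$, and one uses the triangle inequality~\eqref{ti} plus~\eqref{si} to compare $\tja(x:x_0)$ with the old potential at $x$ and with $\tjopta$, then sums. Third, I would run the ``covered vs.\ uncovered clusters'' potential-function induction (Lemma~3.3 in Arthur–Vassilvitskii): at each step, with the appropriate probability the algorithm either hits an uncovered cluster (good, by step two) or wastes the pick on a covered one, and the $(2+\log k)$ factor emerges from bounding the harmonic-type sum $\sum_{i=1}^{k} 1/i$ that tracks how many uncovered clusters remain. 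Assembling the three pieces and tracking constants gives $E[\tja] \le 2U^2(1+V)(2+\log k)\,\tjopta$.

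The main obstacle is keeping the constants $U$ and $V$ faithfully threaded through the two-step triangle-plus-symmetry manipulation in step two: each application of~\eqref{ti} costs a factor $U$ (and the nested application inside the inductive step costs $U^2$), while each direction-flip costs a factor $V$, and one must be careful not to double-count or to lose the additive ``$+1$'' that becomes ``$(1+V)$''. A secondary subtlety is that $\tja$ need not be nonnegative-definite in the metric sense nor satisfy identity of indiscernibles in a way that matters here — but in fact the only properties used are~\eqref{ti}, \eqref{si}, and nonnegativity of $\tja$, so the Arthur–Vassilvitskii skeleton goes through verbatim once the constants are installed. I would also remark (for the application to total Jensen divergences) that on a compact domain the earlier Lemma giving $\rho_{\min} J \le \tJ \le \rho_{\max} J$ lets one read off admissible $U$ and $V$ from the corresponding constants for $J_\alpha$ itself together with the bounded-ratio bounds on $\rho_J$.
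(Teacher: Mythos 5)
Your proposal is correct and takes essentially the same route as the paper: you reproduce the Arthur--Vassilvitskii skeleton by proving the analogue of their Lemma~3.2 --- that a potential-weighted center drawn from an uncovered optimal cluster $A$ incurs expected cost at most $2U^2(1+V)\,\tjopta(A)$, via two uses of \eqref{ti}, one use of \eqref{si} and the min-replacement trick --- and then invoke their Lemma~3.3 harmonic induction for the $(2+\log k)$ factor, exactly as the appendix does. One minor fix: in your second step the bound is simply $2U^2(1+V)\,\tjopta(A)$ with no extra ``correction depending on the current potential of $A$''; the current potential enters only through the seeding distribution and cancels in the averaging.
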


The proof is reported in Appendix~\ref{app:triangularineq}.

To find values for $U$ and $V$, we make two assumptions, denoted H, on $F$, that are
supposed to hold in the convex closure of ${\mathcal{S}}$ (implicit
in the assumptions):
\begin{itemize}
\item First, the maximal condition number of the Hessian of $F$, that is, the ratio between
the maximal and minimal eigenvalue ($>0$) of the Hessian of $F$, is
upperbounded by $K_1$. 

\item Second, we
assume the Lipschitz condition on $F$ that $\Delta_F^2 /
\inner{\Delta}{\Delta} \leq K_2$, for some $K_2>0$.
\end{itemize}

Then we have the following lemma:

\begin{lemma}\label{lem0}
Assume $0<\alpha<1$. Then, under assumption H, for any $p, q, r \in {\mathcal{S}}$, there
exists $\epsilon>0$ such that:
\begin{eqnarray}
\tJ_\alpha(p:r) & \leq & \frac{2 (1+K_2) K_1^2}{\epsilon}\left(\frac{1}{1-\alpha} \tJ_\alpha(p:q) + \frac{1}{\alpha} \tJ_\alpha(q:r)\right)\:\:.\label{triang1}
\end{eqnarray}
\end{lemma}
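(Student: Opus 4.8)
The plan is to establish the claimed "approximate triangle inequality" for $\tJ_\alpha$ by combining three ingredients: the equivalence $\tJ_\alpha(p:q)=\rho_J(p,q)J_\alpha(p:q)$ from the first Lemma, a corresponding (exact) triangle-type inequality for the ordinary skew Jensen divergence $J_\alpha$ under assumption H, and two-sided bounds on the conformal factor $\rho_J$. Concretely, I would first show that under assumption H the conformal factors are pinched, $\rho_{\min}\le\rho_J(p,q)\le\rho_{\max}$, with $\rho_{\min}\ge 1/\sqrt{1+K_2}$ using the Lipschitz bound $\Delta_F^2/\inner{\Delta}{\Delta}\le K_2$ (and $\rho_{\max}\le 1$ trivially). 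Hence $\tJ_\alpha$ and $J_\alpha$ are equivalent up to the multiplicative constant $\sqrt{1+K_2}$, so it is enough to prove a triangle-type inequality for $J_\alpha$ and then pay a factor of $(1+K_2)$ when passing back and forth.

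Next I would prove the $J_\alpha$ inequality. Writing $J_\alpha(p:q)=\frac{1}{\alpha(1-\alpha)}J'_\alpha(p:q)$ and using a second-order Taylor expansion with Lagrange remainder, $J'_\alpha(p:q)=\frac{\alpha(1-\alpha)}{2}(p-q)^\top\nabla^2 F(\xi)(p-q)$ for some $\xi$ on $[p,q]$, so $J_\alpha(p:q)=\frac12(p-q)^\top\nabla^2F(\xi)(p-q)$ is sandwiched between $\frac{\lambda_{\min}}{2}\|p-q\|^2$ and $\frac{\lambda_{\max}}{2}\|p-q\|^2$, where $\lambda_{\min},\lambda_{\max}$ are the extreme eigenvalues of the Hessian over the convex closure of $\mathcal{S}$. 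The condition-number bound gives $\lambda_{\max}/\lambda_{\min}\le K_1$. Then I use the genuine triangle inequality of the Euclidean norm together with $\|p-r\|^2\le 2(\|p-q\|^2+\|q-r\|^2)$ (or the sharper $(\|p-q\|+\|q-r\|)^2$) to get $J_\alpha(p:r)\le \lambda_{\max}(\|p-q\|^2+\|q-r\|^2)\le \frac{2\lambda_{\max}}{\lambda_{\min}}(J_\alpha(p:q)+J_\alpha(q:r))\le 2K_1(J_\alpha(p:q)+J_\alpha(q:r))$. The asymmetric weights $\frac{1}{1-\alpha}$ and $\frac{1}{\alpha}$ in the statement suggest that instead I should bound the two summands separately, exploiting that $J_\alpha(p:q)\ge \frac{\alpha(1-\alpha)}{2}\cdot\frac{\lambda_{\min}}{\alpha(1-\alpha)}\|p-q\|^2$ with $\alpha$-dependent normalization, so that $\|p-q\|^2\le \frac{2}{(1-\alpha)\lambda_{\min}}\cdot(\text{something})$ — I would track where the $1/(1-\alpha)$ and $1/\alpha$ factors naturally fall out of the $J'$-to-$J$ rescaling applied asymmetrically to the two legs.

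Then I combine: $\tJ_\alpha(p:r)=\rho_J(p,r)J_\alpha(p:r)\le J_\alpha(p:r)$ (since $\rho_J\le1$), bound $J_\alpha(p:r)$ by the weighted sum of $J_\alpha(p:q)$ and $J_\alpha(q:r)$ as above, and finally convert each $J_\alpha(\cdot:\cdot)$ back to $\tJ_\alpha(\cdot:\cdot)$ using $J_\alpha(p:q)=\tJ_\alpha(p:q)/\rho_J(p,q)\le \sqrt{1+K_2}\,\tJ_\alpha(p:q)$. Collecting the constants yields something of the form $2K_1\sqrt{1+K_2}\,(\cdots)$; to reach the stated $\frac{2(1+K_2)K_1^2}{\epsilon}$ one presumably lumps a few of these bounds more crudely (e.g. $K_1\le K_1^2$ when $K_1\ge1$, $\sqrt{1+K_2}\le 1+K_2$) and absorbs the eigenvalue ratio into $K_1^2$ rather than $K_1$; the $\epsilon$ in the denominator is most plausibly a uniform lower bound $\epsilon\le \rho_{\min}$ on the conformal factor (or on $\lambda_{\min}$) introduced so the final bound is stated with a single clean parameter.

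The main obstacle I anticipate is not any single estimate but getting the \emph{asymmetric} $\alpha$-dependent constants $\frac{1}{1-\alpha},\frac{1}{\alpha}$ exactly right: the symmetric route via $\|p-r\|^2\le2(\|p-q\|^2+\|q-r\|^2)$ gives symmetric constants, so producing the stated form requires handling the two legs of the path through $q$ with the correct side of the $\alpha(1-\alpha)$ normalization (the leg "into" the midpoint-type argument sees $(1-\alpha)$, the leg "out of" it sees $\alpha$), and being careful that the Hessian evaluation point $\xi$ differs on each leg while still being controlled by the same global eigenvalue bounds on the convex closure of $\mathcal{S}$. A secondary nuisance is justifying that all Taylor/mean-value points indeed lie in the region where assumption H holds, which is exactly why the hypotheses are assumed on the convex closure of $\mathcal{S}$.
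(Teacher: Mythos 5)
Your strategy is sound but genuinely different from the paper's, and it does work once one small point is settled. The paper proves the lemma by starting from the exact identity $J_\alpha(p:r)=J_\alpha(p:q)+J_\alpha(q:r)+\frac{1}{\alpha(1-\alpha)}\left(F((pq)_\alpha)+F((qr)_\alpha)-F((pr)_\alpha)-F(q)\right)$ and bounding the excess term via second-order Taylor expansions, Cauchy--Schwarz and an AM--GM step; that is precisely where the asymmetric weights $\frac{1}{1-\alpha},\frac{1}{\alpha}$ and the slack $\epsilon$ of Eq.~(\ref{ppp}) arise, and the passage from $J_\alpha$ to $\tJ_\alpha$ is made with the ratio bound $\rho_J(p,q)/\rho_J(p',q')\leq 1+K_2$. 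You instead sandwich $J_\alpha(p:q)$ between $\frac{\lambda_{\min}}{2}\|p-q\|^2$ and $\frac{\lambda_{\max}}{2}\|p-q\|^2$, invoke $\|p-r\|^2\leq 2(\|p-q\|^2+\|q-r\|^2)$, and control the conformal factors through $1/\sqrt{1+K_2}\leq\rho_J\leq 1$; this is more elementary and in fact yields the stronger symmetric inequality $\tJ_\alpha(p:r)\leq 2K_1\sqrt{1+K_2}\,\bigl(\tJ_\alpha(p:q)+\tJ_\alpha(q:r)\bigr)$.

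The one step you leave open --- producing the asymmetric $\alpha$-dependent constants --- is a non-issue and should be closed in one line rather than by re-tracking the $\alpha(1-\alpha)$ normalization: since $0<\alpha<1$, we have $\frac{1}{1-\alpha}\geq 1$ and $\frac{1}{\alpha}\geq 1$, so any symmetric bound with constant $C\leq\frac{2(1+K_2)K_1^2}{\epsilon}$ implies the stated inequality verbatim; moreover the lemma only asserts existence of some $\epsilon>0$, so your crude lumping ($K_1\leq K_1^2$ because $K_1\geq 1$, $\sqrt{1+K_2}\leq 1+K_2$) is perfectly admissible. Two technical cautions. First, in the multivariate case there is no single Lagrange point $\xi$ with $J'_\alpha(p:q)=\frac{\alpha(1-\alpha)}{2}\inner{p-q}{\nabla^2F(\xi)(p-q)}$; use the integral remainder (or one mean-value point per leg of the expansion) and pass directly to the two-sided eigenvalue bounds --- this slack is exactly what the paper's factor $\epsilon\in(0,1)$ in Eq.~(\ref{ppp}) encodes, not a lower bound on the conformal factor as you conjecture (that misreading does not affect your own argument). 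Second, your Hessian evaluations on the two legs occur at different points of the convex closure of $\mathcal{S}$, so you need $K_1$ to bound the global ratio $\max_s\lambda_{\max}(s)/\min_s\lambda_{\min}(s)$ there, not merely the pointwise condition number; the paper's proof implicitly uses the same reading (its undefined $\rho_F^2$ becomes $K_1^2$ in the statement), so this is an interpretation of assumption H to make explicit rather than a defect of your route.
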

Notice that because of Eq.~\ref{ppp}, the right-hand side of Eq.~\ref{triang1} tends to zero when $\alpha$ tends to $0$ or $1$. A consequence of Lemma \ref{lem0} is the following triangular inequality:

\begin{corollary}
The total skew Jensen divergence satisfies the following triangular inequality:
\begin{eqnarray}
\tJ_\alpha(p:r) & \leq & \frac{2 (1+K_2) K_1^2}{\epsilon\alpha(1-\alpha)}\left(\tJ_\alpha(p:q) + \tJ_\alpha(q:r)\right)\:\:.\label{triang2}
\end{eqnarray}
\end{corollary}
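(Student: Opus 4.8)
The plan is to obtain the corollary as an immediate weakening of Lemma~\ref{lem0}, so the only work is to replace the two distinct coefficients $\frac{1}{1-\alpha}$ and $\frac{1}{\alpha}$ appearing there by a single common coefficient. First I would recall that Lemma~\ref{lem0} gives, for any $p,q,r\in\mathcal{S}$ and $0<\alpha<1$, the existence of $\epsilon>0$ with
\begin{equation}
\tJ_\alpha(p:r) \leq \frac{2(1+K_2)K_1^2}{\epsilon}\left(\frac{1}{1-\alpha}\tJ_\alpha(p:q) + \frac{1}{\alpha}\tJ_\alpha(q:r)\right).
\end{equation}
Since $0<\alpha<1$ we have $0<\alpha(1-\alpha)\leq 1-\alpha$ and $0<\alpha(1-\alpha)\leq \alpha$, hence $\frac{1}{1-\alpha}\leq \frac{1}{\alpha(1-\alpha)}$ and $\frac{1}{\alpha}\leq \frac{1}{\alpha(1-\alpha)}$.

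Next I would invoke the nonnegativity of the total skew Jensen divergence: by the Lemma of Section~\ref{sec:tJ}, $\tJ_\alpha(p:q)=\rho_J(p,q)J_\alpha(p:q)$ with $\rho_J\in(0,1]$ and $J_\alpha\geq 0$ (by strict convexity of $F$), so both $\tJ_\alpha(p:q)\geq 0$ and $\tJ_\alpha(q:r)\geq 0$. Therefore each summand on the right-hand side only increases when its coefficient is enlarged, and we may replace both coefficients by $\frac{1}{\alpha(1-\alpha)}$. Factoring this common value out of the parenthesis yields precisely
\begin{equation}
\tJ_\alpha(p:r) \leq \frac{2(1+K_2)K_1^2}{\epsilon\,\alpha(1-\alpha)}\left(\tJ_\alpha(p:q) + \tJ_\alpha(q:r)\right),
\end{equation}
which is Eq.~\eqref{triang2}.

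There is essentially no obstacle here: all the analytic content — the Lipschitz bound $K_2$, the condition number bound $K_1$ on the Hessian, and the extraction of $\epsilon$ via the mean-value/Stolarsky argument relating $\rho_J$ to $\rho_B(\epsilon)$ — is already packaged inside Lemma~\ref{lem0}. The step to be careful about is merely the direction of the inequalities $\frac{1}{1-\alpha},\frac{1}{\alpha}\leq\frac{1}{\alpha(1-\alpha)}$ together with the sign conditions $\tJ_\alpha\geq 0$, which are exactly what licenses the coefficient enlargement; one should also note that the constant blows up as $\alpha\to 0$ or $\alpha\to 1$, consistent with the earlier remark that at those limits the bound degenerates.
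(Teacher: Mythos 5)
Your proposal is correct and takes essentially the same route as the paper: there the corollary is treated as an immediate weakening of Lemma~\ref{lem0}, obtained exactly as you do by replacing the coefficients $\frac{1}{1-\alpha}$ and $\frac{1}{\alpha}$ with the common upper bound $\frac{1}{\alpha(1-\alpha)}$, which is licensed by the nonnegativity of $\tJ_\alpha$.
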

The proof is reported in Appendix~\ref{app:triangularineq}.
Thus, we may set $U=\frac{2 (1+K_2) K_1^2}{\epsilon}$ in Theorem~\ref{th1}.

\begin{lemma}
Symmetric inequality condition (\ref{si}) holds for $V = K_1^2(1+K_2)/\epsilon$, for some $0<\epsilon<1$.
\end{lemma}
The proof is reported in Appendix~\ref{app:triangularineq}. 

\section{Conclusion\label{sec:concl}}

We described a novel family of divergences, called total skew Jensen divergences, that are invariant by rotations.
Those divergences scale the ordinary Jensen divergences by a conformal factor independent of the skew parameter, and extend naturally the underlying principle of the former total Bregman divergences~\cite{2012-tBDShapeRetrieval}. 
However, total Bregman divergences differ from total Jensen divergences in limit cases because of the non-separable property of the conformal factor $\rho_J$ of total Jensen divergences. Moreover, although the square-root of the Jensen-Shannon divergence (a Jensen divergence) is a metric~\cite{FugledeTopsoe:2004}, this does
not hold anymore for the total Jensen-Shannon divergence.
Although the total Jensen centroids do not admit a closed-form solution in general, we proved that the convenient $k$-means++ initialization probabilistically guarantees a constant approximation factor to the optimal clustering.
We also reported a simple two-stage iterative algorithm to approximate those total Jensen centroids.
We studied the robustness of those total Jensen and Jensen centroids and proved that skew Jensen centroids robustness depend on the considered generator.

In information geometry, conformal divergences have been recently investigated~\cite{escort-flat-2010}: The single-sided conformal transformations (similar to total Bregman divergences) are shown to flatten the $\alpha$-geometry of the space of discrete distributions into a dually flat structure. Conformal factors have also been introduced in
machine learning~\cite{ConformalKernelSVM-2002} to improve the classification performance of Support Vector Machines (SVMs). In that latter case, the conformal transfortion considered is double-sided separable: $\rho(p,q)=\rho(p)\rho(q)$.
It is interesting to consider the underlying geometry of total Jensen divergences that exhibit non-separable double-sided conformal factors.
A Java\texttrademark{} program that implements the total Jensen centroids are available upon request.

To conclude this work, we leave two open problems to settle:

\begin{open}
Are (total) skew Jensen centroids defined as minimizers of weighted distortion averages unique? 
Recall that Jensen divergences may not be convex~\cite{ISVD:2010}. The difficult case is then when $F$ is multivariate non-separable~\cite{2011-brbhat}. 
\end{open}

\begin{open}
Are total skew Jensen centroids always robust? 
That is, is the influence function of an outlier always bounded?
\end{open}

 

\appendix

\section*{An analytic proof for  the total Jensen divergence formula\label{app:analytic}}
Consider Figure~\ref{fig:sj}.
To define the total skew Jensen divergence, we write the {\em orthogonality constraint} as follows:

\begin{equation}
\Inner{ \vector{p-q}{F(p)-F(q)}}{ \vector{(pq)_\alpha - (pq)_\beta }{F((pq)_\alpha)- (F(p)F(q))_\beta} } =0.
\end{equation}
Note that $(pq)_\alpha - (pq)_\beta=(\alpha-\beta)(p-q)$.
Fixing one of the two unknowns\footnote{We can fix either $\alpha$ or $\beta$, and solve for the other quantity. 
Fixing $\alpha$ always yield a closed-form solution. Fixing $\beta$ may not necessarily yields a closed-form solution. Therefore, we consider $\alpha$ prescribed in the following.}, we solve for the other quantity, and define the total Jensen divergence by
the Euclidean distance between the two points:

\begin{eqnarray}
\tJ'_\alpha(p:q) &=& \sqrt{ (p-q)^2 (\alpha-\beta)^2  + ((F(p)F(q))_\beta-F((pq)_\alpha)^2 },\\
&=& \sqrt{ \Delta^2 (\alpha-\beta)^2  +  (F(q)+\beta \Delta_F  - F(q+\alpha \Delta)) ^2 }
\end{eqnarray}

For a prescribed $\alpha$,  let us solve for $\beta\in\mathbb{R}$:

\begin{eqnarray}
\beta &=& \frac{(F(p)-F(q))(F( (pq)_\alpha )-F(q))+(p-q)( (pq)_\alpha-q)}{(p-q)^2 + (F(p)-F(q))^2 },\\
&=&\frac{(F(p)-F(q))(F( (pq)_\alpha )-F(q))+\alpha(p-q)^2}{(p-q)^2 + (F(p)-F(q))^2 },\\
&=&\frac{\Delta_F (F(q+\alpha\Delta)-F(q))+\alpha\Delta^2}{\Delta^2 + \Delta_F^2 }
\end{eqnarray}

Note that for $\alpha\in(0,1)$, $\beta$ may be falling outside the unit range $[0,1]$.
It follows that:
\begin{eqnarray}
\alpha-\beta(\alpha) &=& \frac{\alpha\Delta^2+\alpha\Delta_F^2-\Delta_F(F(q+\alpha\Delta)-F(q))-\alpha\Delta^2}{\Delta^2+\Delta_F^2}\\
&=& \frac{\Delta_F(\alpha F(p)+(1-\alpha)F(q)-F(\alpha p+(1-\alpha) q))}{\Delta^2+\Delta_F^2}\\
&=& \frac{\Delta_F }{\Delta^2+\Delta_F^2} J'_\alpha(p:q)
\end{eqnarray} 

We apply Pythagoras theorem from  the right triangle in Figure~\ref{fig:sj}, and get: 

\begin{equation}
\underbrace{ ( (pq)_\alpha-(pq)_\beta )^2+( (F(p)F(q))_\alpha - (F(p)F(q))_\beta )^2}_{l=(\alpha-\beta)^2(\Delta^2+\Delta_F^2)}  +  \tJ'^2(p:q)=J'^2(p:q)
\end{equation}
That is, the length $l$ is $|\alpha-\beta|$ times the length of the segment linking $(p, F(p))$ to $(q, F(q))$.
Thus we have:
\begin{equation}
l = \frac{|\Delta_F|}{\sqrt{\Delta^2+\Delta_F^2}}  J'_\alpha(p:q)
\end{equation}
Finally, get the (unscaled) total Jensen divergence:
\begin{eqnarray}
\tJ_\alpha'(p:q) &=&\sqrt{\frac{\inner{\Delta}{\Delta}}{\inner{\Delta}{\Delta}+\Delta_F^2}} J_\alpha'(p:q),\\
&=& \rho_J(p,q) J_\alpha'(p:q).
\end{eqnarray}

\section*{A different definition of total Jensen divergences\label{app:secondtotaldiv}}
As mentioned in Remark~\ref{rq:differenttJ}, we could have used the orthogonal projection principle by fixing the point $((pq)_\beta,(F(p)F(q))_\beta)$ on the line segment $[(p,F(p)),(q,F(q))]$ and seeking for a point $((pq)_\alpha,F((pq)_\alpha))$ on the function graph that yields orthogonality. However, this second approach does not always give rise to a closed-form solution for the total Jensen divergences.
Indeed, fix $\beta$ and consider the point $((pq)_\beta,(F(p)F(q))_\beta))$ that meets the orthogonality constraint.
Let us solve for $\alpha\in[0,1]$, defining the point $((pq)_\alpha,F((pq)_\alpha))$ on the graph function plot.
Then we have to solve for the unique $\alpha\in[0,1]$ such that:

\begin{equation}
\Delta_F F((pq)_\alpha)+\alpha\Delta^2= \beta (\Delta^2+\Delta_F^2)+ \Delta_F F(q).
\end{equation}

In general, we cannot explicitly solve this equation although we can always approximate finely the solution.
Indeed, this equation amounts to solve:
\begin{equation}
b F(q+\alpha (p-q)) +\alpha c -a =0,
\end{equation}
with coefficients $a=\beta (\Delta^2+\Delta_F^2)+\Delta_F F(q)$, $b=\Delta_F$, and $c=\Delta^2$.
In general, it does not admit a closed-form solution because of the $F(q+\alpha (p-q))$ term.
For the squared Euclidean potential $F(x)=\frac{1}{2}\inner{x}{x}$, or the separable Burg entropy $F(x)=-\log x$, we obtain a closed-form solution.
However, for the most interesting case, $F(x)=x\log x-x$ the Shannon information (negative entropy), we cannot get a closed-form solution for $\alpha$.

The length of the segment joining the points $((pq)_\alpha,F((pq)_\alpha))$ and $((pq)_\beta,(F(p)F(q))_\beta)$ is:

\begin{equation}
\sqrt{(\alpha-\beta)^2\Delta^2 + (F((pq)_\alpha)-(F(p)F(q))_\beta)^2}.
\end{equation}

Scaling by $\frac{1}{\beta(1-\beta)}$, we get the second kind of total Jensen divergence:

\begin{equation}
\tJ_\beta^{(2)}(p:q) = \frac{1}{\beta(1-\beta)} \sqrt{(\alpha-\beta)^2\Delta^2 + (F((pq)_\alpha)-(F(p)F(q))_\beta)^2}.
\end{equation}

The second type total Jensen divergences do  not tend to total Bregman divergences when $\alpha\rightarrow 0,1$.

\section*{Information Divergence and Statistical Invariance\label{app:invariance}}

Information geometry~\cite{informationgeometry-2000} is primarily concerned with the differential-geometric modeling of the space of statistical distributions. In particular, statistical manifolds induced by a parametric family of distributions have been historically of prime importance.
In that case, the statistical divergence $D(\cdot:\cdot)$ defined on such a probability manifold $\mathcal{M}=\{p(x|\theta)\ |\ \theta\in\Theta\}$ between two random variables $X_1$ and $X_2$ (identified using their corresponding parametric probability density functions $p(x|\theta_1)$ and $p(x|\theta_2)$) should be {\em invariant} by (1) a {\em one-to-one mapping} $\tau$ of the parameter space, and by (2) sufficient statistics~\cite{informationgeometry-2000} (also called a Markov morphism). 
The first condition means that $D(\theta_1:\theta_2) = D(\tau(\theta_1):\tau(\theta_2)), \forall \theta_1,\theta_2\in\Theta$.
As we illustrate next, a transformation on the observation space $\mathcal{X}$ (support of the distributions) may sometimes induce an {\em equivalent transformation} on the parameter space $\Theta$.
In that case, this yields as a byproduct a third invariance by the transformation on $\mathcal{X}$.

For example, consider a $d$-variate normal random variable $x\sim N(\mu,\Sigma)$ and apply a rigid transformation $(R,t)$ to $x\in\mathcal{X}=\mathbb{R}^d$ so that we get $y=Rx+t\in\mathcal{Y}=\mathbb{R}^d$.
Then $y$ follows a normal distribution.
In fact, this normal$\leftrightarrow$normal conversion is well-known to hold for any {\em invertible affine transformation} $y=A(x)=Lx+t$ with $\det(L) \not =0$ on $\X$, see~\cite{MGMM-2001}: $y \sim N(L\mu + t, L \Sigma L^\top)$.

Consider the Kullback-Leibler (KL) divergence (belonging to the $f$-divergences) 
between $x_1\sim N(\mu_1,\Sigma_1)$ and $x_2\sim N(\mu_2,\Sigma_2)$, and let us show that this amounts to calculate equivalently the KL divergence between $y_1$ and $y_2$: 

\begin{equation}
\KL(x_1:x_2)=\KL(y_1:y_2).
\end{equation}

The KL divergence $\KL(p:q)=\int p(x) \log\frac{p(x)}{q(x)} \dx$ between two $d$-variate normal distributions $x_1\sim N(\mu_1,\Sigma_1)$ and $x_2\sim N(\mu_2,\Sigma_2)$ is given by:
$$
\KL(x_1:x_2)=\frac{1}{2} \left( \tr(\Sigma_2^{-1}\Sigma_1) + \Delta\mu^\top \Sigma_2^{-1} \Delta\mu - \log \frac{|\Sigma_1|}{|\Sigma_2|} -d\right),
$$
with $\Delta\mu=\mu_1-\mu_2$ and $|\Sigma|$ denoting the determinant of the matrix $\Sigma$.

Consider $\KL(y_1:y_2)$ and let us show that the formula is equivalent to $\KL(x_1:x_2)$.
We have the term $R\Sigma_2^{-1} R^\top R \Sigma_1 R^\top=R\Sigma_2^{-1}  \Sigma_1 R^\top$, and from the trace cyclic property, we rewrite this expression as
$\tr(R\Sigma_2^{-1} \Sigma_1  R^\top)=\tr(\Sigma_2^{-1} \Sigma_1 R^\top R)=\tr(\Sigma_2^{-1} \Sigma_1)$ since $R^\top R=I$.
Furthermore, $\Delta\mu'=R(\mu_2-\mu_1)=R\Delta\mu$.
It follows that:
\begin{equation}
\Delta\mu_y^\top R \Sigma_2^{-1} R^\top \Delta\mu_y =  \Delta\mu^\top R^\top R \Sigma_2^{-1} R^\top R\Delta\mu=\Delta\mu^\top \Sigma_2^{-1} \Delta\mu,
\end{equation}
and
\begin{equation}
 \log \frac{|R \Sigma_1 R^\top|}{|R \Sigma_2 R^\top|}=\log \frac{|R| |\Sigma_1| |R^\top|}{|R| |\Sigma_2| |R^\top|} = \log \frac{|\Sigma_1|}{|\Sigma_2|}.
\end{equation}

Thus we checked that the KL of Gaussians is indeed preserved by a rigid transformation in the observation space $\mathcal{X}$: $\KL(x_1:x_2)=\KL(y_1:y_2)$.

\section*{Proof: A $k$-means++ bound for total Jensen divergences\label{app:triangularineq}}
 
\begin{proof}
The key to proving the Theorem is the following Lemma, which parallels
Lemmata from \cite{avKM}
(Lemma 3.2), \cite{nlkMB} (Lemma 3).

\begin{lemma}
Let $A$ be an arbitrary cluster of $C_{opt}$ and $C$ an arbitrary
clustering. 
If we add a random element $x$ from $A$ to $C$, then 
\begin{eqnarray}
E_{x\sim \pi_S} [\tja(A, x) | x\in A] & \leq & 2 U^2(1+V) \tjopta(A)\:\:,\label{eqp1}
\end{eqnarray}
where $\pi_S$ denotes the seeding distribution used to sample $x$.
\end{lemma}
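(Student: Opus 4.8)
The plan is to follow the by-now classical argument of Arthur--Vassilvitskii (\cite{avKM}, Lemma~3.2) and its asymmetric refinement in \cite{nlkMB} (Lemma~3), replacing every appeal to the metric axioms by the two hypotheses~(\ref{ti}) and~(\ref{si}). Write $\delta_C(u)=\min_{c\in C}\tja(u:c)$ for the current potential of a point $u$, so that conditioned on $\{x\in A\}$ the seeding distribution assigns mass $\delta_C(x)/\Sigma$ to $x\in A$, with $\Sigma=\sum_{a\in A}\delta_C(a)$, and adding $x$ as a new center replaces the potential of each $a\in A$ by $\min(\delta_C(a),\tja(a:x))$. Thus
$$E[\tja(A,x)\mid x\in A]\;=\;\sum_{x\in A}\frac{\delta_C(x)}{\Sigma}\sum_{a\in A}\min\bigl(\delta_C(a),\tja(a:x)\bigr).$$

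First I would bound $\delta_C(x)$ against any \emph{fixed} competitor $a\in A$: choosing $c(a)\in C$ with $\tja(a:c(a))=\delta_C(a)$ and applying~(\ref{ti}) gives $\delta_C(x)\le\tja(x:c(a))\le U\bigl(\tja(x:a)+\delta_C(a)\bigr)$; averaging over $a\in A$ yields $\delta_C(x)\le\frac{U}{|A|}\sum_{a\in A}\tja(x:a)+\frac{U}{|A|}\Sigma$. Substituting this into the display and splitting into two sums, I would bound the first using $\min(\delta_C(a),\tja(a:x))\le\delta_C(a)$ (which makes a factor $\Sigma$ cancel the denominator) and the second using $\min(\delta_C(a),\tja(a:x))\le\tja(a:x)$, obtaining
$$E[\tja(A,x)\mid x\in A]\;\le\;\frac{U}{|A|}\Bigl(\sum_{x\in A}\sum_{a\in A}\tja(x:a)+\sum_{x\in A}\sum_{a\in A}\tja(a:x)\Bigr)\;=\;\frac{2U}{|A|}\sum_{x\in A}\sum_{a\in A}\tja(a:x),$$
the last step being merely a relabeling of the dummy indices in the first double sum. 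I want to stress that~(\ref{si}) must \emph{not} be used here: the two double sums are literally equal after renaming, whereas symmetrizing prematurely would cost an extra factor $1+V$ and only give $U^2(1+V)^2$ in the end.

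It then remains to show $\frac1{|A|}\sum_{x\in A}\sum_{a\in A}\tja(a:x)\le U(1+V)\tjopta(A)$, the replacement for the elementary identity $\frac1{|A|}\sum_{a,a'\in A}\|a-a'\|^2=2\sum_{a\in A}\|a-c_A\|^2$ available in the Euclidean case. Letting $c_A$ denote the center of $C_{opt}$ to which $A$ is assigned, so that $\sum_{a\in A}\tja(a:c_A)=\tjopta(A)$, inequality~(\ref{ti}) through $c_A$ followed by~(\ref{si}) gives $\tja(a:x)\le U\tja(a:c_A)+U\tja(c_A:x)\le U\tja(a:c_A)+UV\tja(x:c_A)$ for each pair; summing over $a,x\in A$ and collecting terms produces $\sum_{x,a\in A}\tja(a:x)\le U(1+V)|A|\,\tjopta(A)$. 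Chaining with the previous display gives $E[\tja(A,x)\mid x\in A]\le 2U^2(1+V)\tjopta(A)$, which is the claim; the Theorem then follows from this lemma by the usual induction over the number of still-uncovered optimal clusters, exactly as in~\cite{avKM,nlkMB}. The only genuine difficulty is the constant bookkeeping --- choosing which side of each $\min$ to drop so that the $\Sigma$'s telescope, and avoiding a premature use of~(\ref{si}); the rest is mechanical.
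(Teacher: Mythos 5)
Your proposal is correct and follows essentially the same route as the paper's proof: bound the sampled point's current potential by averaging the relaxed triangle inequality (\ref{ti}) over the cluster, substitute into the seeding expectation, apply the min-replacement trick of \cite{avKM,nlkMB} to reach $\frac{2U}{|A|}\sum_{x,a}\tja(a:x)$, and then use (\ref{ti}) through the optimal center together with (\ref{si}) to conclude with the factor $2U^2(1+V)$. The only difference is presentational: you cleanly separate the current centers $c(a)$ from the optimal center $c_A$, whereas the paper overloads the symbol $c_x$ for both roles.
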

\begin{proof}
First, $E_{x\sim \pi_S} [\tja(A, x) | x\in A] = E_{x\sim \pi_A}
[\tja(A, x)]$ because we require $x$ to be in $A$. We then have:
\begin{eqnarray}
\tja(y:c_y) & \leq & \tja(y:c_x)\\
& \leq & U (\tja(y:x) + \tja(x:c_x))
\end{eqnarray}
and so averaging over all $x\in {\mathcal{A}}$ yields:
\begin{eqnarray}
\tja(y:c_y) & \leq & U \left(\frac{1}{|{\mathcal{A}}|}\sum_{x\in {\mathcal{A}}}{\tja(y:x)} + \frac{1}{|{\mathcal{A}}|}\sum_{x\in {\mathcal{A}}}{\tja(x:c_x)}\right)
\end{eqnarray}
The participation of ${\mathcal{A}}$ to the potential is:
\begin{eqnarray}
\es_{{y} \sim{} {\pi}_{\mathcal{A}}}
[\tja({\mathcal{A}},{y})] & = & \sum_{{y} \in {\mathcal{A}}} {\left\{\frac{\tja(y:c_y)}{\sum_{{x} \in {\mathcal{A}}} {\tja(x:c_x)}} \sum_{{x} \in {\mathcal{A}}} {\min\left\{\tja(x:c_x), \tja(x:y)\right\}} \right\}}\:\:. \label{firstp}
\end{eqnarray}
We get:
\begin{eqnarray}
\es_{{y} \sim{} {\pi}_{\mathcal{A}}}
[\tja({\mathcal{A}},{y})] & \leq &
\frac{U}{|{\mathcal{A}}|}\sum_{{y} \in {\mathcal{A}}}
{\left\{\frac{\sum_{x\in {\mathcal{A}}}{\tja(y:x)}}{\sum_{{x} \in
        {\mathcal{A}}} {\tja(x:c_x)}} \sum_{{x} \in {\mathcal{A}}}
    {\min\left\{\tja(x:c_x), \tja(x:y)\right\}} \right\}}\nonumber\\
  & & +  \frac{U}{|{\mathcal{A}}|}\sum_{{y} \in {\mathcal{A}}}
{\left\{\frac{\sum_{x\in {\mathcal{A}}}{\tja(x:c_x)}}{\sum_{{x} \in
        {\mathcal{A}}} {\tja(x:c_x)}} \sum_{{x} \in {\mathcal{A}}}
    {\min\left\{\tja(x:c_x), \tja(x:y)\right\}} \right\}} \label{eqq2}\\
 & \leq & \frac{2U}{|{\mathcal{A}}|} \sum_{{x} \in
   {\mathcal{A}}}{\sum_{{y} \in {\mathcal{A}}}{\tja(x:y)}} \label{eqq3}\\
 & \leq & 2U^2 \left( \sum_{{x} \in
   {\mathcal{A}}}{\tja(x:c_x)} + \frac{1}{|{\mathcal{A}}|}\sum_{{x} \in
   {\mathcal{A}}}{\sum_{{y} \in {\mathcal{A}}}{\tja(c_x:y)}}\right) \label{eqq4}\\
 & & = 2U^2 \left( \tjopta({\mathcal{A}}) + \frac{1}{|{\mathcal{A}}|}\sum_{{x} \in
   {\mathcal{A}}}{\sum_{{y} \in {\mathcal{A}}}{\tja(c_x:y)}}\right) \label{eqq5}\\
& \leq & 2U^2 \left( \tjopta({\mathcal{A}}) + \frac{V}{|{\mathcal{A}}|}\sum_{{x} \in
   {\mathcal{A}}}{\sum_{{y} \in
     {\mathcal{A}}}{\tja(y:c_x)}}\right)\label{pppp1}\\
 & & = 2U^2(1+V) \tjopta({\mathcal{A}})\:\:,\label{eqq6}
\end{eqnarray}
as claimed. We have used (\ref{ti}) in (\ref{eqq2}), the ``min''
replacement procedure of \cite{avKM}
(Lemma 3.2), \cite{nlkMB} (Lemma 3) in (\ref{eqq3}), a second time
(\ref{ti}) in (\ref{eqq4}), (\ref{si}) in (\ref{pppp1}), and the
definition of $\tjopta({\mathcal{A}})$ in (\ref{eqq5}) and (\ref{eqq6}).
\end{proof}
There remains to use exactly the same proof as in \cite{avKM}
(Lemma 3.3), \cite{nlkMB} (Lemma 4) to finish up the proof of Theorem \ref{th1}.
\end{proof}

We make two assumptions, denoted H, on $F$, that are
supposed to hold in the convex closure of ${\mathcal{S}}$ (implicit
in the assumptions):
\begin{itemize}
\item First, the maximal condition number of the Hessian of $F$, that is, the ratio between
the maximal and minimal eigenvalue ($>0$) of the Hessian of $F$, is
upperbounded by $K_1$. 

\item Second, we
assume the Lipschitz condition on $F$ that $\Delta_F^2 /
\inner{\Delta}{\Delta} \leq K$, for some $K_2>0$.
\end{itemize}

\begin{lemma} 
Assume $0<\alpha<1$. Then, under assumption H, for any $p, q, r \in {\mathcal{S}}$, there
exists $\epsilon>0$ such that:
\begin{eqnarray}
\tJ_\alpha(p:r) & \leq & \frac{2 (1+K_2) K_1^2}{\epsilon}\left(\frac{1}{1-\alpha} \tJ_\alpha(p:q) + \frac{1}{\alpha} \tJ_\alpha(q:r)\right)\:\:.\label{ntriang1}
\end{eqnarray}
\end{lemma}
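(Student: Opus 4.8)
The plan is to reduce the claimed triangular-type inequality for $\tJ_\alpha$ to an ordinary triangular inequality for the underlying skew Jensen divergence $J_\alpha$, and then to control $J_\alpha$ from below and above by a squared Mahalanobis-type quantity using the bounded condition number of the Hessian. First I would use Lemma~\ref{lemma:tJequalJ} together with the conformal-factor representation $\tJ_\alpha(p:q)=\rho_J(p,q)J_\alpha(p:q)$ and the bounds $\rho_{\min}\le\rho_J(p,q)\le 1$ valid on the convex closure of $\mathcal{S}$ (the relevant compact domain). The Lipschitz assumption $\Delta_F^2/\inner{\Delta}{\Delta}\le K_2$ gives $\rho_J(p,q)\ge 1/\sqrt{1+K_2}$, so every conformal factor appearing on either side lies in a fixed band $[1/\sqrt{1+K_2},\,1]$. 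Consequently it suffices to prove the inequality for $J_\alpha$ up to the multiplicative loss $\sqrt{1+K_2}$, i.e.\ to show
\begin{equation}
J_\alpha(p:r)\ \le\ \frac{2K_1^2}{\epsilon}\left(\frac{1}{1-\alpha}J_\alpha(p:q)+\frac{1}{\alpha}J_\alpha(q:r)\right)
\end{equation}
for a suitable $\epsilon>0$ depending only on $F$ and the domain, and then absorb the $\sqrt{1+K_2}$ (more crudely, $(1+K_2)$) factors.

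Next I would linearize $J_\alpha$ via its second-order Taylor form. Recall $J'_\alpha(p:q)=\alpha F(p)+(1-\alpha)F(q)-F(\alpha p+(1-\alpha)q)$; a standard exact-remainder computation writes this as a convex combination of Bregman-type terms, and using the Hessian bounds $\lambda_{\min} I \preceq \nabla^2 F \preceq \lambda_{\max} I$ on the convex closure of $\mathcal{S}$ one gets two-sided estimates
\begin{equation}
\frac{\lambda_{\min}}{2}\,\alpha(1-\alpha)\,\|p-q\|^2\ \le\ J'_\alpha(p:q)\ \le\ \frac{\lambda_{\max}}{2}\,\alpha(1-\alpha)\,\|p-q\|^2,
\end{equation}
hence after dividing by $\alpha(1-\alpha)$,
\begin{equation}
\frac{\lambda_{\min}}{2}\|p-q\|^2\ \le\ J_\alpha(p:q)\ \le\ \frac{\lambda_{\max}}{2}\|p-q\|^2 .
\end{equation}
Now apply the ordinary (squared) triangle inequality $\|p-r\|^2\le 2\|p-q\|^2+2\|q-r\|^2$, or more precisely the weighted version $\|p-r\|^2\le \frac{1}{1-\alpha}\|p-q\|^2+\frac{1}{\alpha}\|q-r\|^2$ coming from convexity of $\|\cdot\|^2$ applied to the splitting $p-r=(p-q)+(q-r)$ with weights $(1-\alpha,\alpha)$. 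Chaining the three displays converts the Euclidean inequality into the $J_\alpha$ inequality with constant $\lambda_{\max}/\lambda_{\min}\le K_1$; squaring-related slack and the passage back to $\tJ_\alpha$ through the conformal band account for the extra $K_1$ and $(1+K_2)$ factors, while the $1/\epsilon$ records the fact that $\lambda_{\min}>0$ only on the compact convex closure of $\mathcal{S}$, so one may take $\epsilon$ proportional to $\lambda_{\min}$ (equivalently, to the reciprocal of the condition bound times a domain constant).

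The main obstacle is bookkeeping the constants so that exactly $\frac{2(1+K_2)K_1^2}{\epsilon}$ comes out, and in particular pinning down what $\epsilon$ is: it is not an arbitrarily small free parameter but is tied to the smallest Hessian eigenvalue (and the diameter) over the convex closure of $\mathcal{S}$, and one must check that it is legitimately positive there. A secondary subtlety is the weighted triangle inequality with weights $(1-\alpha,\alpha)$ rather than $(\tfrac12,\tfrac12)$: this is what produces the $\frac{1}{1-\alpha}$ and $\frac{1}{\alpha}$ coefficients and is essential for the remark following the statement (that the right-hand side vanishes as $\alpha\to 0,1$, matching $\tJ_\alpha\to 0$ there since $J_\alpha$ stays finite while $\rho_J$ is bounded). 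Everything else — the Taylor expansion with Lagrange remainder, the eigenvalue sandwiching, the Cauchy–Schwarz/convexity steps — is routine once the Hessian bounds from assumption H are in hand.
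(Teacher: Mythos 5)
Your proposal is correct, but it takes a genuinely different route from the paper. The paper proves the lemma by the exact decomposition $J_\alpha(p:r)=J_\alpha(p:q)+J_\alpha(q:r)+\bigl(F((pq)_\alpha)+F((qr)_\alpha)-F((pr)_\alpha)-F(q)\bigr)$ (up to the $\tfrac{1}{\alpha(1-\alpha)}$ normalization), bounds the remainder via Taylor expansions of $F$ and $\nabla F$ together with the Hessian factorization $H_F=PDP^\top$ and the quadratic-form representation $J_\alpha(p:q)=\alpha(1-\alpha)\epsilon\|(D^{1/2}P^\top)((pq)_\gamma)(p-q)\|_2^2$, and only then multiplies by the conformal factor, using $\rho_J(p,q)/\rho_J(p',q')\le 1+K_2$. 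You instead sandwich $J_\alpha$ directly via strong convexity/smoothness, $\tfrac{\lambda_{\min}}{2}\|p-q\|^2\le J_\alpha(p:q)\le\tfrac{\lambda_{\max}}{2}\|p-q\|^2$, apply the weighted Euclidean inequality $\|p-r\|^2\le\tfrac{1}{1-\alpha}\|p-q\|^2+\tfrac{1}{\alpha}\|q-r\|^2$, and convert to $\tJ_\alpha$ through the band $1/\sqrt{1+K_2}\le\rho_J\le 1$; this is more elementary, avoids the Taylor bookkeeping entirely, and in fact yields the sharper constant $\sqrt{1+K_2}\,\lambda_{\max}/\lambda_{\min}$, which is absorbed by the stated $\tfrac{2(1+K_2)K_1^2}{\epsilon}$ thanks to the existential $\epsilon$. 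Two caveats: (i) you read assumption H as bounding the \emph{global} ratio $\lambda_{\max}/\lambda_{\min}$ over the convex closure of $\mathcal{S}$ rather than the pointwise condition number — but the paper's own proof needs the same reading when it compares Hessian quadratic forms at distinct points $s,s',s''$ and $(pq)_\gamma$, and the free $\epsilon$ covers the slack in either case; (ii) your parenthetical that the right-hand side vanishes as $\alpha\to 0,1$ with $\tJ_\alpha\to 0$ is not accurate — with the $\tfrac{1}{\alpha(1-\alpha)}$ normalization the limits are total-Bregman-type divergences and the weights $\tfrac{1}{\alpha},\tfrac{1}{1-\alpha}$ blow up — but this side remark plays no role in your argument.
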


\begin{proof}
Because the Hessian $H_F$ of $F$ is symmetric positive definite, it
can be diagonalized as $H_F(s) = (P D P^\top)(s)$ for some unitary matrix $P$
and positive diagonal matrix $D$. 
Thus for any $x$, $\langle x, H_F(s) x\rangle
= \langle (D^{1/2} P^\top)(s) x, (D^{1/2} P^\top)(s) x\rangle = \|(D^{1/2} P^\top)(s) x\|_2^2$. 
Applying Taylor expansions to $F$
and its gradient $\nabla F$ yields that for any $p, q, r$ in ${\mathcal{S}}$ there exists
$0<\beta, \gamma, \delta, \beta', \gamma', \delta', \epsilon <1$ and $s, s',s''$ in
the simplex $pqr$ such that:

\begin{eqnarray}
\lefteqn{2(F( (pq)_\alpha ) +
F( (qr)_\alpha ) - F( (pr)_\alpha ) - F(q))}\nonumber\\
 & = & \alpha^2 (\beta'+\gamma'-1) \|(D^{1/2} P^\top)(s') (p-q)\|_2^2 + (1-\alpha)^2 (\beta+\gamma-1) \|(D^{1/2} P^\top)(s)
 (q-r)\|_2^2  \nonumber\\
 & & + \alpha(1-\alpha) \langle ((D^{1/2} P^\top)(s) + (D^{1/2} P^\top)(s')) (p-q),((D^{1/2} P^\top)(s) + (D^{1/2} P^\top)(s')) (q-r)\rangle
 \nonumber\\
 & \leq & \alpha^2 \|(D^{1/2} P^\top)(s') (p-q)\|_2^2 + (1-\alpha)^2 \|(D^{1/2} P^\top)(s)
 (q-r)\|_2^2  \nonumber\\
 & & + 2\alpha(1-\alpha) \|(D^{1/2} P^\top)(s'') (p-q)\|_2 \|(D^{1/2}
 P^\top)(s'') (q-r)\|_2 \nonumber\\
& \leq & \frac{\rho_F^2}{\epsilon \alpha(1-\alpha)}\left(\alpha^2 J_\alpha(p:q) + (1-\alpha)^2 J_\alpha(q:r) + 2\alpha(1-\alpha) \sqrt{ J_\alpha(p:q)  J_\alpha(q:r)}\right)
 \label{eqj}\\
 & \leq & \frac{2 \rho_F^2}{\epsilon}\left(\frac{\alpha}{1-\alpha} J_\alpha(p:q) + \frac{1-\alpha}{\alpha} J_\alpha(q:r)\right)\:\:. \label{eqj2}
\end{eqnarray}
Eq. (\ref{eqj}) holds because a Taylor expansion of $J_\alpha(p:q)$ yields that
$\forall p, q$,
there exists $0<\epsilon, \gamma<1$ such that 
\begin{eqnarray}
J_\alpha(p:q) & = & \alpha(1-\alpha) \epsilon \|(D^{1/2}
P^\top)((pq)_\gamma) (p-q)\|_2^2\:\:.\label{ppp}
\end{eqnarray} 
Hence, using (\ref{eqj2}), we obtain:
\begin{eqnarray}
J_\alpha(p:r) & = & J_\alpha(p:q) + J_\alpha(q:r) + F( (pq)_\alpha ) +
F( (qr)_\alpha ) - F( (pr)_\alpha ) - F(q)\nonumber\\
 & \leq & \frac{2 \rho_F^2}{\epsilon}\left(\frac{1}{1-\alpha} J_\alpha(p:q) + \frac{1}{\alpha} J_\alpha(q:r)\right)\:\:.
\end{eqnarray}
Multiplying by $\rho(p,q)$ and using the fact that $\rho(p,q) /
\rho(p',q')<1+K$, $\forall p,p',q,q'$, yields the statement of the Lemma.
\end{proof}
Notice that because of Eq.~\ref{ppp}, the right-hand side of Eq.~\ref{triang1} tends to zero when $\alpha$ tends to $0$ or $1$. A consequence of Lemma \ref{lem0} is the following triangular inequality:

\begin{corollary}
The total skew Jensen divergence satisfies the following triangular inequality:
\begin{eqnarray}
\tJ_\alpha(p:r) & \leq & \frac{2 (1+K_2) K_1^2}{\epsilon\alpha(1-\alpha)}\left(\tJ_\alpha(p:q) + \tJ_\alpha(q:r)\right)\:\:. 
\end{eqnarray}
\end{corollary}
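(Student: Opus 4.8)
The plan is to obtain the corollary as an immediate weakening of Lemma~\ref{lem0}. Since we are assuming $0<\alpha<1$, we have simultaneously $\alpha\le 1$ and $1-\alpha\le 1$; dividing each of these by the positive quantity $\alpha(1-\alpha)$ yields the two elementary scalar bounds $\frac{1}{1-\alpha}\le\frac{1}{\alpha(1-\alpha)}$ and $\frac{1}{\alpha}\le\frac{1}{\alpha(1-\alpha)}$.

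Next I would recall that $\tJ_\alpha(p:q)=\rho_J(p,q)J_\alpha(p:q)\ge 0$, since the conformal factor satisfies $\rho_J\in(0,1]$ and $J_\alpha\ge 0$ for $\alpha\in(0,1)$; likewise $\tJ_\alpha(q:r)\ge 0$. Hence, starting from the inequality of Lemma~\ref{lem0} (Eq.~\ref{triang1}) and replacing the coefficient $\frac{1}{1-\alpha}$ in front of $\tJ_\alpha(p:q)$ and the coefficient $\frac{1}{\alpha}$ in front of $\tJ_\alpha(q:r)$ by the larger common value $\frac{1}{\alpha(1-\alpha)}$ can only increase the right-hand side. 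Factoring that common value out gives
\[
\tJ_\alpha(p:r)\ \le\ \frac{2(1+K_2)K_1^2}{\epsilon}\cdot\frac{1}{\alpha(1-\alpha)}\bigl(\tJ_\alpha(p:q)+\tJ_\alpha(q:r)\bigr)\ =\ \frac{2(1+K_2)K_1^2}{\epsilon\,\alpha(1-\alpha)}\bigl(\tJ_\alpha(p:q)+\tJ_\alpha(q:r)\bigr),
\]
which is exactly the asserted triangular inequality.

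There is no real obstacle here: all of the analytic content — the second-order Taylor expansions of $F$ and $\nabla F$ over the simplex $pqr$, the use of the condition-number bound $K_1$ on the Hessian to compare the quadratic forms $\|(D^{1/2}P^\top)(\cdot)(\cdot)\|_2^2$ at different base points, the Lipschitz bound $K_2$ controlling $\Delta_F^2/\inner{\Delta}{\Delta}$, and the passage from $J_\alpha$ to $\tJ_\alpha$ via the bounded ratio of conformal factors $\rho(p,q)/\rho(p',q')<1+K_2$ — is already carried out in the proof of Lemma~\ref{lem0}. The corollary merely repackages that estimate into a symmetric form with a single $\alpha$-dependent prefactor, which is the shape needed to read off $U=\frac{2(1+K_2)K_1^2}{\epsilon}$ (after absorbing $\alpha(1-\alpha)\le 1/4$ into $U$) in the hypotheses of Theorem~\ref{th1}. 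The only point worth a line of comment when writing it up is that this weakening introduces no new constants and so does not disturb the $\epsilon\to 0$ discussion following Lemma~\ref{lem0}; that is clear since $\frac{1}{\alpha(1-\alpha)}$ stays bounded on any sub-interval of $(0,1)$ bounded away from the endpoints.
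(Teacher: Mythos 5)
Your deduction is correct and is essentially the paper's own (implicit) argument: the corollary follows from Lemma~\ref{lem0} simply by bounding $\frac{1}{1-\alpha}\le\frac{1}{\alpha(1-\alpha)}$ and $\frac{1}{\alpha}\le\frac{1}{\alpha(1-\alpha)}$ and using the nonnegativity of the total Jensen divergences, exactly as you do. Nothing further is needed, since all the analytic work lives in the proof of the lemma itself.
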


\begin{lemma}
Symmetric inequality condition (\ref{si}) holds for $V = K_1^2(1+K_2)/\epsilon$, for some $0<\epsilon<1$.
\end{lemma}
\begin{proof}
We make use again of (\ref{ppp}) and have, for some $0<\epsilon,
\gamma, \epsilon', \gamma'<1$ and any $p, q$, both $J_\alpha(p:q) = \alpha(1-\alpha) \epsilon \|(D^{1/2}
P^\top)((pq)_\gamma) (p-q)\|_2^2$ and $J_\alpha(q:p)  = J_{1-\alpha}(p:q) = \alpha(1-\alpha) \epsilon' \|(D^{1/2}
P^\top)((pq)_{\gamma'}) (p-q)\|_2^2$, which implies, for some:
$0 < \epsilon'' <1$
\begin{eqnarray}
\frac{J_\alpha(p:q)}{J_{\alpha}(q:p)} & \leq & \frac{K_1^2}{\epsilon''} 
\end{eqnarray} 
Using the fact that $\rho(p,q) / \rho(q, p) \leq 1 + K_2$, $\forall p,
q$, yields the statement of the Lemma.
\end{proof}

\end{document}